\documentclass[11pt]{article}
\usepackage{graphicx}

\usepackage[centertags]{amsmath}
\usepackage{amsfonts}
\usepackage{color}
\usepackage{amssymb}
\usepackage{amsthm}
\usepackage{newlfont}
    \usepackage[ansinew]{inputenc}

\newtheorem{theorem}{Theorem}

\newtheorem{lemma}[theorem]{Lemma}

\sloppy

\renewenvironment{proof}[1][]%
 {\noindent {\setcounter{claim}{0}\bf Proof ---
   }{#1}{}}{\hfill$\Box$\vspace{2ex}}

\def\Box{\hbox{\vrule width6pt height6pt depth0pt}}

{\newdimen\quadamount\quadamount=1.5em%
\def\TAB##1{\\\hskip##1\quadamount\relax}%
\medskip\sf\begin{tabular}{l}}%
{\end{tabular}\medskip}%


\begin{document}


\title{Edge-colouring and total-colouring chordless graphs}

\author{Raphael C. S. Machado\thanks{{Inmetro, Rio de Janeiro,
      Brazil.} {Partially supported by Brazilian agencies CNPq and
      Faperj.} rcmachado@inmetro.gov.br}, 
Celina M. H. de Figueiredo\thanks{{COPPE, Universidade Federal do Rio
    de Janeiro, Brazil.} {Partially supported by Brazilian agencies
    CNPq and Faperj.} celina@cos.ufrj.br}, 
Nicolas~Trotignon\thanks{{CNRS, LIP, ENS Lyon, France.} {Partially supported by \emph{Agence Nationale de la Recherche} under reference
    \textsc{anr 10 jcjc 0204 01} and by PHC Pavle Savi\'c grant,
    jointly awarded by EGIDE, an agency of the French Minist\`ere des
    Affaires \'etrang\`eres et europ\'eennes, and Serbian Ministry for
    Science and Technological Development. Also: INRIA, Universit\'e
    de Lyon.} nicolas.trotignon@ens-lyon.fr}
}

\date{May 14, 2013}
\maketitle

\begin{abstract}
  A graph $G$ is \emph{chordless} if no cycle in $G$ has a chord.  In
  the present work we investigate the chromatic index and total
  chromatic number of chordless graphs.  We describe a known
  decomposition result for chordless graphs and use it to establish that
  every chordless graph of maximum degree $\Delta\geq 3$ has chromatic
  index $\Delta$ and total chromatic number $\Delta + 1$.  The proofs are
  algorithmic in the sense that we actually output an optimal colouring of a
  graph instance in polynomial time. 

  \textbf{Keywords}: chordless graph, graph decomposition, 
  edge-colouring, total-colouring.
\end{abstract}


\def\baselinestretch{1}
\typeout{Introduction}

\section{Introduction}
\label{s:introduction}

Let $G=(V,E)$ be a simple graph. The degree of a vertex $v$ in $G$ is
denoted by $\deg_{G}(v)$ while the maximum degree in $G$ is
denoted by $\Delta(G)$.  

\subsection*{Edge-colouring}

An \emph{edge-colouring} of $G$ is a function
$\pi:E\rightarrow\mathbf{C}$ such that no two adjacent edges receive
the same colour $c\in\mathbf{C}$. If $\mathbf{C}=\{1,2,...,k\}$, we
say that $\pi$ is a $k$-\emph{edge-colouring}.  The \emph{chromatic
index} of $G$
is the least~$k$ for which $G$
has a $k$-edge-colouring.
The chromatic index of any graph $G$ is obviously at least
$\Delta(G)$.  
Vizing's theorem~\cite{Vizing} states that every graph $G$,
is $(\Delta(G)+1)$-edge-colourable.

It is NP-complete to determine whether a graph $G$ is $\Delta(G)$-edge-colourable~\cite{Holyer,Leven}.
The edge-colouring problem remains NP-complete for perfect graphs~\cite{CaiEllis}. 
%
Graph classes for which edge-colouring is polynomially solvable include
bipartite graphs~\cite{Johnson}, split-indifference graphs~\cite{SplitIndifference},
series-parallel graphs (hence outerplanar)~\cite{Johnson}, and
$k$-outerplanar graphs, for $k\geq1$~\cite{Bodlaender}.
The complexity of edge-colouring is unknown for several well-studied
strongly structured graph classes, for which only partial results have
been reported, such as cographs~\cite{Barbosa}, join
graphs~\cite{join1,JoinCobipartite}, planar
graphs~\cite{Sanders}, chordal graphs, and several
subclasses of chordal graphs such as indifference
graphs~\cite{Celina}, split graphs~\cite{ChenFuKo} and interval
graphs~\cite{Bojarshinov}.

%

\subsection*{Total-colouring}

An \emph{element} of a graph is either a vertex or an edge.  Two
elements are \emph{adjacent} if they are either adjacent vertices,
adjacent edges or a vertex incident to an edge.  A \emph{total-colouring} of $G$ is a function $\pi:V\cup E\rightarrow\mathbf{C}$
such that no two adjacent elements receive the same colour
$c\in\mathbf{C}$. If $\mathbf{C}=\{1,2,...,k\}$, we say that $\pi$ is
a $k$-\emph{total-colouring}.  The \emph{total chromatic number} of
$G$
is the least~$k$ for which $G$ has a
$k$-total-colouring.
The total chromatic number of any graph $G$ is obviously at least
$\Delta(G)+1$.  
The \emph{Total Colouring Conjecture (TCC)},
posed independently by Behzad~\cite{BehzadTese} and Vizing~\cite{Vizing},
states that every simple graph $G$ 
would be 
$(\Delta(G)+2)$-total-colourable
and it is a challenging open problem in Graph Theory.

It is NP-complete to determine whether a graph $G$ is $(\Delta(G)+1)$-total-colourable~\cite{SanchezArroyo1}.
The total-colouring problem remains NP-complete when restricted to 
$r$-regular bipartite inputs~\cite{SanchezArroyo}, for each fixed $r\geq3$.
The total-colouring 
problem is known to be polynomial --- and the TCC is valid --- for few 
very restricted graph classes, such as cycles, complete graphs, 
complete bipartite graphs~\cite{Yap}, grids~\cite{ChristianeBipartidos}, 
series-parallel graphs~\cite{SP-choose-Delta4,WangPang,SP-choose-Delta3}, 
and split-indifference graphs~\cite{ChristianeSplitIndifference}.
%
%
%
%
%
The computational complexity of the total-colouring problem is unknown for 
several important and well-studied graph classes. The complexity of 
total-colouring planar graphs is unknown; in fact, even the TCC 
has not yet been settled for planar graphs~\cite{WeifanWang}. The complexity of 
total-colouring is open for the class of chordal graphs, and the partial results 
for the related classes of interval graphs~\cite{Bojarshinov}, 
split graphs~\cite{ChenFuKo} and dually chordal graphs~\cite{Dually} expose 
the interest in the total-colouring problem restricted to chordal graphs. Another 
class for which the complexity of total-colouring is unknown is the class of 
join graphs: the results found in the literature consider very restricted 
subclasses of join graphs~\cite{Hilton,OneKindJoin}.

\subsection*{Chordless graphs}

A \emph{cycle} $C$ in a graph $G$ is a sequence of vertices
$v_1v_2\ldots v_nv_1$, that are distinct except for the first and the
last vertex, such that for $i=1, \ldots ,n-1$, $v_iv_{i+1}$ is an edge
and $v_nv_1$ is an edge --- we call these edges \emph{the edges of~
  $C$}.  An edge of $G$ with both endvertices in a cycle $C$ is called
a \emph{chord} of~$C$ if it is not an edge of $C$.  A \emph{chordless
  graph} is a graph whose cycles are all chordless.  Chordless graphs
were first studied independently by Dirac~\cite{dirac:chordless} and
Plummer~\cite{plummer:68}.  In particular, they give several
characterizations of these graphs, prove that they are
3-vertex-colourable and that they are 2-connected if and only if they
are minimally 2-connected (meaning that the removal of any edge yields
a non 2-connected graph).  L{\'e}v{\^e}que, Maffray and
Trotignon~\cite{ISK4} studied the class of chordless graphs independently.
The motivation was to compute the
chromatic number of line graphs with no induced generalized wheels, where a
\emph{generalized wheel} is a graph made of a cycle together with a vertex that
has at least 3 neighbours on the cycle.  They observed that these line
graphs are the line graphs of chordless graphs of maximum degree at
most 3, and they prove that chordless graphs of maximum degree at most
3 are 3-edge-colourable.  On the way to this result, they prove a
decomposition theorem for chordless graphs that is seemingly
independent of the results in~\cite{dirac:chordless,plummer:68}.

There is a second motivation for the study of the chromatic index of
chordless graphs: they are a subclass of the class of
\emph{unichord-free} graphs which are graphs that do not contain, as
induced subgraph, a cycle with unique chord.  Trotignon and Vu\v
skovi\'c ~\cite{tv} proved a decomposition theorem for these graphs,
that in fact contains implicitly the decomposition of chordless graphs
from~\cite{ISK4}.  Both edge-colouring and total-colouring problems are 
NP-complete problems when restricted to unichord-free graphs, 
as proved by Machado, de Figueiredo and Vu{\v s}kovi{\'c}~\cite{Edge-tv}
and by Machado and de Figueiredo~\cite{Total-tv}; hence, it is of interest to
determine subclasses of unichord-free graphs for which edge-colouring
and total-colouring are polynomial.

Our main result is the following.

\begin{theorem}
  \label{th:main}
  Let $G$ be a chordless graph of maximum degree at least~3.  Then $G$
  is $\Delta(G)$-edge-colourable and $(\Delta(G)+1)$-total-colourable.
  Moreover, these colourings may be obtained in time $O(|V(G)|^3|E(G)|)$.
\end{theorem}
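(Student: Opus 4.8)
The plan is to prove both statements simultaneously, by induction on $|V(G)|$, using the known decomposition theorem for chordless graphs recalled above: every chordless graph is either a \emph{basic} graph, drawn from a short list of explicitly described classes, or it admits a structural fault, namely a $1$-cutset or a \emph{proper $2$-cutset} $\{a,b\}$ with $a,b$ non-adjacent. I would first dispose of the basic cases by exhibiting optimal colourings directly: these classes are simple enough (sparse graphs, and a few small fixed graphs) that $\Delta$-edge-colourings and $(\Delta+1)$-total-colourings can be written down by hand or obtained from elementary facts such as K\"onig's theorem for bipartite graphs. Throughout, the hypothesis $\Delta(G)\ge 3$ must be kept in force; since the statement can fail for $\Delta\le 2$ (odd cycles are not $2$-edge-colourable), the decomposition has to be applied in a way that never forces us to colour a piece of maximum degree $2$ with only $2$ colours. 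In practice this is arranged by the marker gadgets introduced at the cutsets, which keep degrees controlled, together with a handful of explicit small base cases handled separately.

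The core of the argument is the recursive reassembly. For a $1$-cutset at a vertex $v$, write $G=G_1\cup G_2$ with $V(G_1)\cap V(G_2)=\{v\}$; colour each $G_i$ by induction, then permute colour names in $G_2$ so that the colours appearing on the edges at $v$ in $G_2$ are disjoint from those at $v$ in $G_1$ (possible because $\deg_{G_1}(v)+\deg_{G_2}(v)=\deg_G(v)\le\Delta$), and, for the total-colouring, also so that the colour assigned to $v$ agrees; the union is then a valid colouring of $G$. For a proper $2$-cutset $\{a,b\}$, I would split $G$ into blocks $G_1,G_2$, adding to each block a short \emph{marker path} between $a$ and $b$ whose length (typically two or three) is chosen to preserve chordlessness and to realise the right degrees at $a$ and $b$; then I colour each augmented block by induction and glue. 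The colours on the marker path of $G_1$ record how the ``outside'' of $G_2$ uses colours at $a$ and at $b$ (including the colours of $a$ and $b$ in the total-colouring case), and after a single colour permutation of $G_2$ matching this record, the two colourings agree on $\{a,b\}$ and combine to a colouring of $G$. Chordlessness enters precisely here, since it forces the neighbourhoods of $a$ and of $b$ inside each block to be sparse, which is what makes the required relabelling feasible.

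I expect the main obstacle to be the proper-$2$-cutset case for the total-colouring: with only $\Delta+1$ colours there is essentially no slack at a vertex of full degree, so the marker paths must be designed to transmit exactly the right information about which colours are forbidden at $a$ and at $b$ from the other side, and the inductive statement almost certainly has to be strengthened to a \emph{precolouring-extension} form, asserting that any valid (edge- or total-) colouring of the marker path extends to the whole augmented block. Finding a strengthened hypothesis that is simultaneously strong enough to drive the gluing through and weak enough to hold for all the basic graphs is the delicate design choice, and checking that the permutations demanded at $a$ and at $b$ can be realised by one relabelling of $G_2$ is the technical crux. Once the recursion is in place, the running time is routine to bound: the decomposition is computable in polynomial time, each reassembly step costs $O(|E(G)|)$ for the relabelling, the recursion tree has $O(|V(G)|)$ nodes, and locating a cutset costs $O(|V(G)|^2|E(G)|)$ (trying all pairs of vertices and testing connectivity), which after bookkeeping yields the claimed $O(|V(G)|^3|E(G)|)$ bound.
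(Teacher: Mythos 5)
There is a genuine gap at the heart of your recursive step. At a proper 2-cutset $\{a,b\}$ you propose to colour both blocks independently by induction and then reconcile them by a single permutation of the colour names in $G_2$. Such a permutation need not exist: with $\Delta=3$, if the edges of $G_1$ at $a$ use colour $\{1\}$ and at $b$ use colour $\{2\}$, while the edges of $G_2$ at $a$ and at $b$ both use $\{1,2\}$ (all degree sums at $a$ and $b$ being at most $3$), then the permutation $\pi$ would need $\pi(\{1,2\})\subseteq\{2,3\}\cap\{1,3\}=\{3\}$, which is impossible. So the reassembly requires recolouring one block with knowledge of the constraints imposed by the other, not merely relabelling it. You do flag this as the crux and suggest strengthening the induction to a precolouring-extension statement for every augmented block, but that is exactly the part you leave unproved, and in this generality it is a much stronger claim than what is needed or known; the NP-completeness of these colouring problems on the closely related unichord-free graphs (which also decompose by proper 2-cutsets) is a warning that naive block-gluing cannot be the whole story.

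The paper resolves this asymmetrically, and that asymmetry is the missing idea in your proposal. It first proves an \emph{extreme decomposition} theorem: choosing a proper 2-cutset with $|X|$ minimum forces the block $G_X$ to be 2-sparse (every edge has an end of degree at most $2$), with $a,b$ having at least two neighbours in $X$. Then only the $Y$-side, namely $G[Y\cup\{a,b\}]$ (or, for total-colouring with $\Delta=3$, a block with $a,b$ contracted), is coloured freely by induction; the colours seen at $a$ and $b$ are turned into \emph{lists} of allowed colours on the $X$-side edges, and the colouring is extended into the 2-sparse block via list-edge-colouring lemmas for bipartite and 2-sparse graphs built on K\"onig's theorem (plus path-extension lemmas for the $\Delta=3$ total case). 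Thus the precolouring-extension property is only ever needed, and only ever proved, for 2-sparse blocks --- which is feasible --- rather than for arbitrary blocks as your plan would require. Your complexity accounting is in line with the paper's, but without the extreme decomposition and the 2-sparse list-colouring lemmas the inductive reassembly as you describe it does not go through.
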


Note that the edge-colouring of chordless graphs with maximum degree~3
was already established in \cite{ISK4}. 
Theorem~\ref{th:main} relies on the decomposition theorem
from~\cite{ISK4}.  
We emphasize, however, that there are differences
between the proof of~\cite{ISK4} and ours. Most remarkably, since~\cite{ISK4} 
deals only with the case $\Delta=3$, any cutset of two non-adjacent vertices 
actually determines a cutset of two non-adjacent edges. 
Such an edge-cutset is used to construct a natural induction on the decomposition blocks.
Our proof uses a different strategy based on the existence of an extreme decomposition tree, 
in which one of the decomposition blocks is 2-sparse. This leads to our third and main motivation
for this work: to understand how such kind of decomposition results, which are classically applied
to the design of vertex colouring algorithms, can be useful in the development of algorithms for
other colouring problems, in particular edge-colouring~\cite{Edge-tv} and total-colouring~\cite{Total-tv,Total-tv-square} --- the present work is successful in the sense that our chosen class of chordless graphs showed to be fruitful
for the development of polynomial-time edge-colouring and total-colouring algorithms.

Section~\ref{s:structure} reviews the decomposition result for
chordless graphs established in~\cite{ISK4}. Section~\ref{s:2-sparse} gives several results for a
subclass of chordless graphs, the so-called \emph{2-sparse graphs}.
Section~\ref{s:proof} gives the proof of Theorem~\ref{th:main}.

\section{Structure of chordless graphs}
\label{s:structure}

The goal of the present section is to review the structure theorem
from~\cite{ISK4}.  A graph is \emph{2-sparse} if every edge is incident
to at least one vertex of degree at most~2.  A 2-sparse graph is
chordless because any chord of a cycle is an edge between two vertices
of degree at least three.  

A \emph {proper 2-cutset} of a connected graph $G=(V,E)$ is a pair of
non-adjacent vertices $a, b$ such that $V$ can be partitioned into
non-empty sets $X$, $Y$ and $\{ a,b \}$ so that: there is no edge between $X$ and $Y$; and both $G[X \cup \{
a,b \}]$ and $G[Y \cup \{ a,b \}]$ contain an $ab$-path and neither of 
$G[X \cup \{a,b \}]$ nor $G[Y \cup \{ a,b \}]$ is a chordless path.
We say that $(X, Y, a, b)$ is a \emph{split} of
this proper 2-cutset.

  The decomposition result stated in Theorem~\ref{t:chordless} is implicit in \cite{tv}, explicit in~\cite{ISK4}, and
  for the sake of completeness, we include a short proof
  from~\cite{aboulkerRTV:propeller}.

\begin{theorem}[L{\'e}v{\^e}que, Maffray and Trotignon~\cite{ISK4}]
  \label{t:chordless}
  If $G$ is a 2-connected chordless graph, then either $G$ is 2-sparse
  or $G$ admits a proper 2-cutset.
\end{theorem}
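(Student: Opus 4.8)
The plan is to prove the contrapositive in the following form: assume $G$ is a 2-connected chordless graph that is \emph{not} 2-sparse, and produce a proper 2-cutset. Since $G$ is not 2-sparse, there is an edge $uv$ with $\deg_G(u)\geq 3$ and $\deg_G(v)\geq 3$. The intuition is that in a chordless graph, such a ``heavy'' edge cannot lie on any cycle (a cycle through $uv$ would have $uv$ as a chord unless $uv$ is one of its edges, but then the two neighbours of $u$ and $v$ off the cycle create trouble) --- more carefully, one shows that $uv$ is not in any chordless cycle, and since $G$ is chordless every cycle is chordless, so $uv$ lies on no cycle at all, i.e.\ $uv$ is a bridge. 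But $G$ is 2-connected, so $G$ has no bridge; this apparent contradiction is resolved by looking one step further out. I would instead argue directly: pick the heavy edge $uv$ and consider $G - uv$.

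First I would establish the structural claim that if $uv$ is an edge with both endpoints of degree $\geq 3$, then $G-uv$ is disconnected, or else $u$ and $v$ form the desired cutset after a small adjustment. Concretely, in $G - uv$ let me look at whether there is a $uv$-path. If $G-uv$ is disconnected, $uv$ is a bridge, contradicting 2-connectivity; so there is a $uv$-path $P$ in $G-uv$, and then $P + uv$ is a cycle $C$ of $G$. Now $u$ has a neighbour $u'\notin\{v\}\cup V(P)$ adjacent to $u$ on the cycle? No --- I need $u$ to have a third neighbour not on $C$ only if $\deg_G(u) \geq 3$ forces it, which it need not. The cleaner route: among all cycles of $G$, none contains $uv$ together with... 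Let me take the shortest $uv$-path $P$ in $G-uv$; then $C = P+uv$ is an \emph{induced} cycle (shortest cycles through a fixed edge are induced in any graph), hence chordless automatically, so this gives no contradiction from chordlessness. The real leverage is 2-connectedness plus the existence of the third neighbours.

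Here is the approach I would actually carry out. Let $uv$ be heavy. By 2-connectivity there is a cycle $C$ through $uv$; choose $C$ with $V(C)$ minimal, so $C$ is chordless (this uses that $G$ is chordless, or just shortest-cycle-through-an-edge). Since $\deg_G(u)\geq 3$, $u$ has a neighbour $a \notin V(C)$; similarly $v$ has a neighbour $b\notin V(C)$. One shows $a \neq b$: if $a=b$ then $uab v$ would be... $u,v$ both adjacent to $a$, plus edge $uv$, gives a triangle, and the rest of $C$ gives a chord, contradiction (a triangle $uva$ with $C$ longer than a triangle means an edge of $C$ incident to $u$ or $v$ is a chord of the triangle $uvau$ --- careful bookkeeping needed but it works). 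Then I claim $\{a\dots\}$--- actually the right cutset is not $\{a,b\}$ in general either. The correct statement to aim for, and the one the short proof in \cite{aboulkerRTV:propeller} surely uses, is: take $uv$ heavy, and let $(X,Y)$ be a partition of $V\setminus\{u,v\}$ induced by the components of $G - \{u,v\}$; show that exactly the right grouping makes $(X,Y,u,v)$ a split. So: delete $u$ and $v$. If $G-\{u,v\}$ is connected, then together with the facts that $u$ has $\geq 2$ neighbours in it and $v$ has $\geq 2$ neighbours in it and there's no chord $uv$--- no wait, $uv \in E$. I think the actual proof deletes the \emph{edge} $uv$ and not the vertices, then argues each side of the resulting structure, checking the four conditions of a proper 2-cutset: $a,b$ non-adjacent, both sides non-empty, no edges across, both sides contain an $ab$-path and neither side is a chordless path --- the ``neither is a chordless path'' part being exactly where the third neighbours $a\notin V(C)$, $b \notin V(C)$ guarantee branching.

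The main obstacle I anticipate is pinning down the correct cutset and verifying the last two clauses of the definition of proper 2-cutset --- namely that each of $G[X\cup\{a,b\}]$ and $G[Y\cup\{a,b\}]$ contains an $ab$-path (connectivity of each side, which should follow from 2-connectedness of $G$ and a standard Menger/ear-decomposition argument) and, more delicately, that \emph{neither} side is a chordless path (this is where one must use that the two endpoints of the cutset each have degree $\geq 3$, so on at least one side there is extra branching, and symmetrically; the subtle case is when one side looks like a path and one must relocate the cutset or use chordlessness to derive a chord). I would structure the write-up as: (1) fix a heavy edge and a minimal cycle $C$ through it; (2) extract the cutset vertices and the partition from $C$ and the components of $G$ minus those vertices; (3) verify the no-edges-across and non-emptiness conditions from 2-connectivity; (4) verify the $ab$-path conditions; (5) verify neither side is a chordless path, handling the degenerate case last. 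Steps (1)--(4) should be short; step (5) is the crux and the place a careless proof would go wrong.
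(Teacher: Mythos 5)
Your proposal never actually produces the cutset, and the idea it is missing is precisely the engine of the paper's proof. The paper deletes the heavy edge $e=uv$ and asks whether $G\setminus e$ is \emph{2-connected}, not merely connected: if it were 2-connected, then $u$ and $v$ would lie on a common cycle of $G\setminus e$, and that cycle avoids the edge $uv$, so $uv$ is a chord of it in $G$ --- contradicting chordlessness. Hence $G\setminus e$ has a cutvertex $w$ (it cannot be disconnected, as you correctly note), and the proper 2-cutset is $\{u,w\}$: one checks $uw\notin E$ because a cycle through $u'$, $u$, $v$, $w$ (built from a $u'w$-path in $G_u$ and a $vw$-path in $G_v$) would otherwise have $uw$ as a chord, and one uses the third neighbour of $u$ (which must avoid $P_u$, again by chordlessness) to see that the $X$-side is not a chordless path. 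You got as far as observing, correctly, that a single $uv$-path in $G-uv$ yields a cycle in which $uv$ is an edge rather than a chord, so chordlessness gives no contradiction --- but you stopped there instead of invoking 2-connectivity of $G\setminus e$, which gives \emph{two} internally disjoint $uv$-paths and hence a cycle with $uv$ as a genuine chord. That is the missing step.

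Because of this, your fallback plan (minimal cycle $C$ through $uv$, third neighbours $a,b$ off $C$, then ``extract the cutset from $C$ and the components'') never names a candidate pair of non-adjacent vertices, and you yourself concede that neither $\{a,b\}$ nor $\{u,v\}$ works ($u$ and $v$ are adjacent, so they can never be a proper 2-cutset). Steps (2) and (5) of your outline --- identifying the cutset and verifying that neither side is a chordless path --- are exactly where the content lies, and they are left unresolved; the verification of the non-path condition in the paper is not a routine degree count but uses chordlessness once more (to push the third neighbour of $u$ off the path $P_u$). As it stands the proposal is an exploration that correctly rules out some wrong turns but does not constitute a proof.
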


\begin{proof}
  If $G$ is not 2-sparse, then it has an edge $e=uv$ such that $u$ and
  $v$ have both degree at least~3.  If $G\setminus e$ is 2-connected,
  then it contains a cycle through $u$ and $v$, hence $uv$ is a chord
  in $G$, a contradiction.  It follows that $G \setminus e$ is
  disconnected or has a cutvertex.

  If $G \setminus e$ is disconnected, then $u$ and $v$ are
  cutvertices 
  of $G$, a contradiction. So $G\setminus e$ is connected and has
  a cutvertex $w$.  Since $G$
  is 2-connected, the graph $(G\setminus e) \setminus w$ has exactly
  two connected components $G_u$ and $G_v$, containing $u$ and $v$
  respectively, and $V = V(G_u) \cup V(G_v) \cup \{w\}$.  Let $u'\notin \{v,
  w\}$ be a neighbour of $u$ ($u'$ exists since $u$ has degree at least
  3).  So, $u'\in V(G_u)$.  In $G$, vertex $u$ is not a cutvertex, so there is a
  path $P_u$ from $u'$ to $w$ whose interior is in $G_u$.  Together
  with a path $P_v$ from $v$ to $w$ with interior in $G_v$, $P_u$,
  $uu'$ and $e$ form a cycle, so $uw\notin E(G)$ for otherwise $uw$
  would be a chord of this cycle. 
  Also, since $u$ is of degree at least 3 and $G$ is chordless, $u$ has a neighbour in $G_u\setminus P_u$.
  Hence $(V(G_u)\setminus \{u\}, V(G_v), u, w)$ is a split of a proper 2-cutset of $G$.
 \end{proof}

 The \emph{blocks} $G_X$ and $G_Y$ of a graph $G$ with respect to a
 proper 2-cutset with split $(X, Y, a, b)$ are defined as
 follows. Block $G_X$ (resp.\ $G_Y$) is the graph obtained by taking
 $G[X\cup\{a,b\}]$ (resp.\ $G[Y\cup\{a,b\}]$) and adding a new vertex
 $w$ adjacent to $a, b$.  Vertex $w$ is a called the \emph{marker
 vertex} of the block $G_X$ (resp.\ $G_Y$). 
 Clearly, blocks $G_X$ and $G_Y$ of a 2-connected chordless graph
 are 2-connected chordless graphs as well.
%
%
%

Theorem~\ref{t:extremal} states that a not 2-sparse chordless graph has
an extremal decomposition, that is, a decomposition in which (at least) 
one of the blocks is 2-sparse.

\begin{theorem}
\label{t:extremal}
Let $G$ be a 2-connected not 2-sparse chordless graph.  Let $(X, Y, a,
b)$ be a split of a proper 2-cutset of $G$ such that $|X|$ is minimum among
all possible such splits.  Then $a$ and $b$ both have
at least two neighbours in $X$, and $G_{X}$ is 2-sparse.
\end{theorem}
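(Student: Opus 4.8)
The plan is to exploit the minimality of $|X|$ to force structure on $G_X$. First I would establish that $a$ has at least two neighbours in $X$ (and symmetrically $b$). Suppose, for contradiction, that $a$ has a unique neighbour $a'$ in $X$. The idea is that $a'$ then behaves like a ``continuation'' of $a$: since $G[X\cup\{a,b\}]$ contains an $ab$-path and is not a chordless path, and all of $X$'s connection to $a$ passes through $a'$, I expect to be able to replace the cutset $\{a,b\}$ by $\{a',b\}$ with the smaller side $X\setminus\{a'\}$, contradicting minimality of $|X|$ — provided I can check that $(X\setminus\{a'\},\, Y\cup\{a\},\, a', b)$ is genuinely a split of a proper 2-cutset. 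This requires verifying the path conditions: $G[(X\setminus\{a'\})\cup\{a',b\}]=G[X\cup\{b\}]\setminus\{a\}$ still contains an $a'b$-path and is not a chordless path (here is where ``not a chordless path'' for the original block, plus chordlessness, should be used), and that $G[Y\cup\{a\}\cup\{a',b\}]$ contains an $a'b$-path and is not a chordless path — the latter since the $Y$-block already contained an $ab$-path which extends through the edge $aa'$. The degenerate cases where $X\setminus\{a'\}$ becomes empty, or where $a'$ is adjacent to $b$, have to be handled separately and are where I would be most careful: if $X=\{a'\}$ then $G[X\cup\{a,b\}]$ is the path $aa'b$ or, if $a'b\notin E$, just a non-path only if there are parallel structures — but in a simple chordless graph $X=\{a'\}$ forces $G[X\cup\{a,b\}]$ to be a chordless path, contradicting that it is a block side. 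So in fact $|X|\geq 2$ is automatic, and the contradiction argument goes through.

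**Next**, with $a$ and $b$ each having at least two neighbours in $X$, I would show $G_X$ is 2-sparse, i.e. every edge of $G_X$ is incident to a vertex of degree $\leq 2$ in $G_X$. The marker vertex $w$ has degree $2$, so edges $wa$ and $wb$ are fine. For an edge $e=uv$ with $u,v\in X\cup\{a,b\}$, suppose both $u$ and $v$ have degree $\geq 3$ in $G_X$. I claim this lets me build a proper 2-cutset of $G_X$ — and hence of $G$ — with a smaller $X$-side, contradicting minimality. The mechanism: apply the proof technique of Theorem~\ref{t:chordless} to $G_X$ (which is a 2-connected chordless graph) using the edge $e$; this produces a split $(X', Y', u, w')$ of a proper 2-cutset of $G_X$. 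The key point to nail down is that one of the two sides of this new cutset of $G_X$ sits inside $X$ and is strictly smaller than $X$, and then that this cutset of $G_X$ lifts to a proper 2-cutset of the original $G$ — because the marker vertex $w$ of $G_X$, having degree $2$ and being adjacent exactly to $a$ and $b$, does not obstruct anything: any cutset of $G_X$ avoiding $w$ is literally a cutset of the corresponding induced subgraph of $G$, and combined with the $Y$-side of $G$ it yields a cutset of $G$. Since $a,b$ each have $\geq 2$ neighbours in $X$, the side of $G$ obtained this way still has both blocks containing the relevant $ab$-paths and being non-paths.

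**The technical heart** is the bookkeeping that the newly produced split of $G$ has $X$-part of size strictly less than $|X|$. I would argue as follows: the split of $G_X$ obtained from edge $e=uv$ via the Theorem~\ref{t:chordless} argument has both sides nonempty and together with $\{u, w'\}$ partitions $V(G_X)=X\cup\{a,b,w\}$; since $w$ lies in one of the two sides, the \emph{other} side is a subset of $X\cup\{a,b\}\setminus(\text{that side}\cup\{u,w'\})$, hence a proper subset of $X\cup\{a,b\}$. One then has to check it avoids $a$ and $b$ or can be arranged to — and if it does not, i.e. if $a$ or $b$ is one of the two cut-vertices $u,w'$ of $G_X$, a short separate argument using again that $a,b$ have $\geq 2$ neighbours in $X$ (so $a,b$ have degree $\geq 3$ in $G_X$ and the construction still applies) shows the smaller side stays strictly inside $X$. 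So the main obstacle I anticipate is not any single deep idea but rather the disciplined case analysis ensuring that in every configuration the lifted cutset of $G$ is (i) proper — both blocks are non-paths and contain $ab$-paths — and (ii) strictly smaller on the $X$-side; the non-path condition should follow each time from chordlessness together with the degree-$\geq 3$ hypotheses on the endpoints of $e$, exactly as in the proof of Theorem~\ref{t:chordless}.
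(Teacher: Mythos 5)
Your overall strategy is the same as the paper's (shift a cut vertex with a unique neighbour in $X$ to contradict minimality; then turn a proper 2-cutset of $G_X$ into a proper 2-cutset of $G$ with a smaller $X$-side), but two steps that you flag as ``to be handled'' are precisely where the real content lies, and as written they are gaps. First, in the two-neighbours part, the replacement pair $\{a',b\}$ is only a proper 2-cutset if $a'$ and $b$ are \emph{non-adjacent}; you acknowledge the case $a'b\in E$ but give no argument for it, and it is not automatic. The paper kills it with a chordlessness argument: since $G[X\cup\{a,b\}]$ is not a chordless path and $G$ is 2-connected, there is a $b$--$a'$ path in $G[X\cup\{b\}]\setminus a'b$, and together with the edge $aa'$ and an $ab$-path through $Y$ this yields a cycle in which $a'b$ is a chord --- contradiction. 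Without this, the minimality contradiction in your first step does not go through; note this is exactly where chordlessness (not just minimality) is used in this half.

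Second, in the 2-sparseness part, your mechanism --- rerunning the proof of Theorem~\ref{t:chordless} on $G_X$ from the bad edge $e=uv$ --- guarantees that one cut vertex is $u$, but the other is merely a cutvertex $w'$ of $G_X\setminus e$, and nothing prevents $w'$ from being the marker vertex $w$ itself (this happens whenever every $ab$-path of $G[X\cup\{a,b\}]$ uses $e$). In that case the cutset of $G_X$ contains a vertex that does not exist in $G$, and your lifting ``any cutset of $G_X$ avoiding $w$ is literally a cutset in $G$'' has nothing to apply to; your separate case analysis only covers $a$ or $b$ being a cut vertex, not $w$. The paper avoids this by taking an arbitrary proper 2-cutset of $G_X$ from Theorem~\ref{t:chordless} and then re-using the first-half shifting argument to choose one whose two cut vertices both have degree at least~3, which excludes the degree-2 marker; once $w$ lies strictly inside one side $X_1$, its only neighbours $a,b$ automatically lie in $X_1\cup\{u,v\}$, so the lifted split $(X_2,\,Y\cup X_1\setminus\{w\},\,u,\,v)$ is well defined and strictly smaller, with no further cases. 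The remaining bookkeeping you describe (the smaller side avoids $a,b$, is strictly contained in $X$, and the non-path conditions survive) does work out as you predict, but the two missing arguments above must be supplied for the proof to stand.
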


\begin{proof}
  First, we show that $a$ and $b$ both have  at least two neighbours in $X$. 
  Suppose that one of $a,b$, say $a$,  has a
  unique neighbour $a' \in X$.  We claim that $a'$ is not adjacent to
  $b$.  For otherwise, since $G[X \cup \{a, b\}]$ does not induce a
  path and $G$ is 2-connected, there is a path in $G[X\cup\{b\}]\setminus a'b$ from $b$ to
  $a'$, that 
  together with the edge $a'a$ and a path
  from $a$ to $b$ with interior in $Y$
  forms a cycle with a chord: $a'b$.  So, $a'$ is not adjacent
  to $b$.  Hence, by replacing $a$ by $a'$, we obtain a proper
  2-cutset that contradicts the minimality of $X$. 

  Let $w$ be the marker vertex of $G_{X}$.
  Suppose that 2-connected chordless graph $G_X$ is not 2-sparse. According to
  Theorem~\ref{t:chordless}, $G_X$ has a proper 2-cutset with split
  $(X_1,X_2,u,v)$.  Choose it so that $u$ and $v$ both have degree at
  least 3 (this is possible as explained at the beginning of the
  proof).  Note that $w\notin \{u, v\}$. W.l.o.g. $w\in X_1$.
  But then $\{a,b\}\subseteq X_1\cup\{u,v\}$ and hence $(X_2,Y\cup X_1\setminus\{w\},u,v)$ 
  is a proper 2-cutset of $G$, contradicting the
  minimality of $|X|$. Therefore, $G_X$ is 2-sparse.

%
%
%
\end{proof}

\section{2-Sparse graphs}
\label{s:2-sparse}

A 2-sparse graph is easily shown to be $\Delta(G)$-edge-colourable and $(\Delta(G)+1)$-total-colourable  (see below). 
But we need slightly more for the induction in our main proof.

\begin{theorem}[K\" onig]
  \label{th:konig}
  Any bipartite graph $G$ is $\Delta(G)$-edge-colourable. 
\end{theorem}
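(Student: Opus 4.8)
The plan is to prove this by a direct, algorithmic argument: colour the edges one at a time, using alternating-path recolouring together with the fact that a bipartite graph has no odd cycle. Fix an arbitrary ordering $e_1,\dots,e_m$ of the edges and maintain the invariant that, after processing $e_1,\dots,e_{i-1}$, their colouring is a proper $\Delta(G)$-edge-colouring. Consider the step where we colour $e_i=uv$. Since at most $\deg_G(u)-1\le\Delta(G)-1$ edges incident to $u$ are already coloured, some colour $\alpha$ is absent at $u$; similarly some colour $\beta$ is absent at $v$. If $\alpha=\beta$, assign that colour to $e_i$ and proceed.

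The substantive case is $\alpha\ne\beta$. Look at the subgraph $H$ formed by the already-coloured edges whose colour lies in $\{\alpha,\beta\}$: each vertex meets at most one edge of each of these two colours, so $H$ is a disjoint union of paths and cycles. Let $P$ be the maximal $\alpha\beta$-alternating path of $H$ starting at $u$; its first edge, if it has one, is coloured $\beta$ (since $\alpha$ is absent at $u$). The key claim is that $v\notin V(P)$. First, $v$ cannot be an interior vertex of $P$, since such a vertex meets both an $\alpha$-edge and a $\beta$-edge, contradicting that $\beta$ is absent at $v$. Second, $v$ cannot be the far endpoint of $P$: otherwise $P$ would start at $u$ on a $\beta$-edge and end at $v$ on an $\alpha$-edge (it cannot end on a $\beta$-edge, as $\beta$ is absent at $v$), hence $P$ has even length, and $P$ together with $e_i=uv$ would form an odd cycle — impossible in a bipartite graph. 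This parity step is the only place where bipartiteness is used, and it is the one point of the proof that requires care.

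Given the claim, swap the colours $\alpha$ and $\beta$ on all edges of $P$. The colouring of the already-coloured edges stays proper, and now $\beta$ is absent at $u$ (the first edge of $P$, if any, changed from $\beta$ to $\alpha$, and $u$ met no $\alpha$-edge before). Since $v\notin V(P)$, nothing at $v$ changed, so $\beta$ is still absent at $v$; assign colour $\beta$ to $e_i$, restoring the invariant. Iterating over all edges yields a $\Delta(G)$-edge-colouring of $G$, and each step clearly runs in polynomial time. (Alternatively one could embed $G$ into a $\Delta(G)$-regular bipartite graph, extract a perfect matching via Hall's theorem, delete it, and induct on $\Delta$; the recolouring proof above is preferable in the present context because it is directly constructive.)
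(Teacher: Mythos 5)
Your proof is correct. The key step --- that the maximal $\alpha\beta$-alternating path $P$ starting at $u$ cannot reach $v$, because it would have to end at $v$ on an $\alpha$-edge, forcing even length and hence an odd cycle with $uv$ --- is handled properly, and the Kempe-chain swap preserves properness at both endpoints precisely because $\alpha$ was absent at $u$ and $P$ is maximal at its other end. Note, however, that the paper does not prove Theorem~\ref{th:konig} at all: it is stated as K\"onig's classical result and, for the algorithmic side, the paper simply cites Section~20.1 of Schrijver's book for an $O(nm)$ edge-colouring algorithm. Your alternating-path argument is in fact the standard proof underlying that cited algorithm: each of the $m$ insertion steps costs $O(n)$ for locating and recolouring the path, giving $O(nm)$ overall, so your constructive proof is fully consistent with the running-time bound the paper relies on in its complexity analysis.
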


\begin{theorem}[Borodin, Kostochka, and Woodall~\cite{Borodin}]
\label{t:BKW}
Let $G=(V,E)$ be a bipartite graph and suppose that a list $L_{uv}$ of colours
is associated to each edge $uv\in E$. If for each edge
$uv\in E$, $|L_{uv}|\geq\max\{\deg_{G}(u),\deg_{G}(v)\}$, then there
is an edge-colouring $\pi$ of $G$ such that, for each edge $uv\in E$,
$\pi(uv)\in L_{uv}$.
\end{theorem}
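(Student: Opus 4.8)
The statement is the list–edge–colouring theorem for bipartite graphs in its ``degree'' form, and the natural route is the \emph{kernel method}, exactly as in Galvin's proof that bipartite graphs are $\Delta$-edge-choosable. Recall that a \emph{kernel} of a digraph $D$ is an independent set $K\subseteq V(D)$ such that every vertex of $V(D)\setminus K$ has an out-neighbour in $K$, and that $D$ is \emph{kernel-perfect} if every induced subdigraph of $D$ has a kernel. The tool is the Kernel Lemma of Bondy, Boppana and Siegel: if $D$ is kernel-perfect and every vertex $x$ has a list $L(x)$ with $|L(x)|\ge d^+_D(x)+1$ (out-degree plus one), then the underlying graph of $D$ admits a proper colouring $\pi$ with $\pi(x)\in L(x)$ for all $x$. (Its proof is a short induction: pick a colour $\gamma$ appearing in some list, let $K$ be a kernel of the subdigraph induced by the vertices whose list contains $\gamma$, colour $K$ with $\gamma$, delete $K$, and remove $\gamma$ from the remaining lists; every surviving vertex either never contained $\gamma$ or has lost an out-neighbour to $K$, so the hypothesis is preserved.) So it suffices to exhibit an orientation $D$ of the line graph $L(G)$ such that $L(G)$ is kernel-perfect under $D$ and $d^+_D(e)\le\max\{\deg_G(u),\deg_G(v)\}-1$ for every edge $e=uv$ of $G$; then $|L_{uv}|\ge\max\{\deg_G(u),\deg_G(v)\}=d^+_D(e)+1$, a proper colouring of $L(G)$ is a proper edge-colouring of $G$, and we are done.

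To build $D$, fix a bipartition $(A,B)$ of $G$ and, for each vertex $w$, a linear order $\prec_w$ on the set $E(w)$ of edges at $w$; orient $L(G)$ by declaring, for edges $e,f$ sharing a vertex $w$, that $e\to f$ when $w\in A$ and $e\prec_w f$, and $e\to f$ when $w\in B$ and $f\prec_w e$. Since $G$ is triangle-free, every clique of $L(G)$ is contained in some star $E(w)$, on which $D$ restricts to a linear order; hence $D$, and its restriction to every induced subdigraph, is clique-acyclic. As $L(G)$ is the line graph of a bipartite graph it is perfect, so by the theorem of Boros and Gurvich that perfect graphs are kernel-solvable, every induced subdigraph of $D$ has a kernel; for this particular orientation one can alternatively read off kernel-perfectness directly from the Gale--Shapley stable-matching theorem, which is Galvin's original argument. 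With this orientation, for $e=uv$ with $u\in A$ and $v\in B$,
\[
 d^+_D(e)=\bigl|\{f\in E(u):e\prec_u f\}\bigr|+\bigl|\{f\in E(v):f\prec_v e\}\bigr|.
\]

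It remains to choose the orders $\prec_w$ so that this quantity never exceeds $\max\{\deg_G(u),\deg_G(v)\}-1$, and this is the heart of the matter and the step where bipartiteness is used essentially. Writing $p_u(e)$ for the number of edges at $u$ lying above $e$ in $\prec_u$ and $q_v(e)$ for the number of edges at $v$ lying below $e$ in $\prec_v$, we need a system of orders with $p_u(e)+q_v(e)\le\max\{\deg_G(u),\deg_G(v)\}-1$ for every edge $e=uv$. I would prove that such a system exists by induction on $|E(G)|$: if $G$ has a vertex of degree $1$, delete the incident edge, apply induction, and reinsert it at the bottom of one of the two orders and the top of the other; otherwise pick an edge $e_0=u_0v_0$ incident to a vertex of maximum degree $\Delta$, delete it, apply induction, and reinsert $e_0$ into $\prec_{u_0}$ and $\prec_{v_0}$ at positions chosen so that every edge running to a maximum-degree vertex keeps its position while $p_{u_0}(e_0)+q_{v_0}(e_0)\le\Delta-1$ still holds. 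The bookkeeping in this last step — in particular the edges joining two vertices of degree $\Delta$, for which the inductive bound is already tight — is the main obstacle; an alternative is to phrase the existence of the orders as a two-level system of Hall conditions (for each choice of orders on the $A$-side, a feasibility problem for the orders on the $B$-side) and to verify feasibility by a flow/matching argument.
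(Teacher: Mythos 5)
First, a remark on context: the paper itself does not prove Theorem~\ref{t:BKW}. It is quoted from Borodin, Kostochka and Woodall, the authors explicitly describe that proof as ``quite involved'', and they deliberately sidestep it by proving the weaker Lemma~\ref{t:list-edge-colouring-bipartite} (all lists at a vertex of the high-degree side coincide), which has a short greedy/K\"onig proof and is all that the chordless-graph argument needs. So your attempt is being measured against the cited theorem, not against a proof contained in the paper.

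Your kernel-method reduction is the right skeleton and is correctly set up: the Kernel Lemma, the orientation of $L(G)$ by linear orders $\prec_w$ with opposite conventions on the two sides, kernel-perfectness via Boros--Gurvich or Gale--Shapley, and the identity $d^+_D(e)=p_u(e)+q_v(e)$ are all sound, and with lists of size $\Delta$ this machinery yields Galvin's theorem. But the entire content of the \emph{local} bound $\max\{\deg_G(u),\deg_G(v)\}$ sits in the one claim you do not prove: the existence of orders with $p_u(e)+q_v(e)\le\max\{\deg_G(u),\deg_G(v)\}-1$ for every edge. The induction you sketch fails exactly where you concede it might. When you reinsert $e_0=u_0v_0$ at a maximum-degree vertex, every edge $f$ at $u_0$ or $v_0$ whose other endpoint also has degree $\Delta$ has a bound that does not increase, so $e_0$ is forced below all such edges in $\prec_{u_0}$ and above all such edges in $\prec_{v_0}$; this makes $p_{u_0}(e_0)+q_{v_0}(e_0)$ at least the number of degree-$\Delta$ neighbours of $u_0$ plus the number of degree-$\Delta$ neighbours of $v_0$, which can exceed $\Delta-1$, so the inductive step does not close. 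The alternative ``two-level Hall condition / flow argument'' is only gestured at, with no feasibility proof. As it stands the proposal is a correct reduction to an unproven key lemma --- essentially the hard core of the Borodin--Kostochka--Woodall argument --- rather than a proof; to make it complete you must either carry out that combinatorial construction in full or, as the paper does, retreat to the restricted statement of Lemma~\ref{t:list-edge-colouring-bipartite} that admits an elementary and algorithmic proof.
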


The proof of the preceding theorem in~\cite{Borodin} is quite
involved.  It relies on counting arguments in such a way that it does
not imply clearly an algorithm outputting the colouring whose existence
is proved.  We propose the following simple weakening stated in Lemma~\ref{t:list-edge-colouring-bipartite},
which is enough for our purpose and whose proof
is clearly algorithmic.

Let $V_{\geq3}(G)$ be the set of vertices with degree at least 3 in G. 
Note that if G is 2-sparse then $V_{\geq3}(G)$ is a stable (i.e., independent, possibly empty) set. 
If $S$ is a stable set in $V(G)$, let $B(G,S)$ denote the bipartite subgraph of $G$ 
whose vertices are the vertices of $S$ and their neighbours, and whose edges 
are the edges of $G$ that have one end in $S$.

\begin{lemma}
  \label{t:list-edge-colouring-bipartite}
  Let $G=(V,E)$ be a bipartite graph with a bipartition $(X, Y)$ such
  that $V_{\geq3}(G)\subset X$.  Suppose that a
  list $L_{uv}$ of colours is associated to each edge $uv\in E$ such
  that for all $u\in X$, all edges incident to $u$ receive the same
  list. If for each edge $uv\in E$, $|L_{uv}| \geq \max\{\deg_{G}(u),
  \deg_{G}(v)\}$, then there is an edge-colouring $\pi$ of $G$ such
  that, for each edge $uv\in E$, $\pi(uv)\in L_{uv}$.
\end{lemma}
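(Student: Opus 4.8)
The plan is to induct on $|E|$ and colour the edges of $G$ one vertex of $X$ at a time. Since every vertex of degree at least $3$ lies in $X$ and all edges at such a vertex share a common list, the difficulty is concentrated at the high-degree vertices; vertices of $Y$ have degree at most $2$ but their incident edges may carry arbitrary lists. First I would observe that if $X$ contains a vertex $u$ of degree at most $1$, or if $Y$ contains a vertex of degree at most $1$, we can delete the incident edge (if any), colour the rest by induction, and extend: the deleted edge $uv$ sees at most $\max\{\deg_G(u),\deg_G(v)\}-1 < |L_{uv}|$ already-coloured neighbours, so a free colour remains. Hence we may assume every vertex of $X$ has degree at least $2$ and every vertex of $Y$ has degree exactly $2$ (a vertex of $Y$ of degree $0$ is irrelevant).

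Under that assumption each vertex $v\in Y$ has exactly two neighbours in $X$, so contracting each such $v$ into an edge between its two $X$-neighbours turns $G$ into a multigraph $H$ on vertex set $X$; equivalently, think of $G$ as a graph $H$ on $X$ where each edge of $H$ is a subdivided edge carrying the list of whichever original edge we please — but we must colour \emph{both} halves, and the two halves at $v$ are adjacent, so they need two distinct colours from $L_{e_1}$ and $L_{e_2}$ where $e_1,e_2$ are the two edges at $v$. The key structural step is to find, in $H$, a vertex $u$ of degree at most $2$ (in $H$); such a vertex exists provided $H$ is not everywhere of degree $\ge 3$. If some component of $H$ is a single edge or a path, colour it greedily from the ends inward. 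If $H$ has a vertex $u$ of $H$-degree at most $2$ that is not a cut-structure obstruction, peel off the edges of $H$ at $u$: in $G$ this is at most two paths $u\,y\,u'$ and $u\,y'\,u''$; remove them, colour the rest by induction, and then colour $u y, u y'$ from $L_u$ (at most two constraints from the $\le 2$ other edges at $u$, and $|L_u|\ge \deg_G(u)\ge 2$ suffices with a little care) and finally $y u', y' u''$ greedily since $u'$ (resp.\ $u''$) has at most one other coloured edge.

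The real case is when $H$ is $3$-regular-ish, i.e.\ every vertex of $X$ has $G$-degree at least $3$ and every vertex of $Y$ has degree exactly $2$: then $G$ is bipartite with $\min$-degree-in-$X$ at least $3$, and I would instead argue directly. Pick any vertex $u\in X$ of minimum degree $d=\deg_G(u)\ge 3$, let its incident edges be $e_1,\dots,e_d$ with common list $L=L_u$, $|L|\ge d$. Delete $e_1,\dots,e_d$, colour $G'$ by induction. Each $e_i=u v_i$ now needs a colour from $L$ avoiding: the (at most one) other coloured edge at $v_i$, and the colours already assigned to $e_1,\dots,e_{i-1}$. This is exactly a bipartite list-edge-colouring of the star $K_{1,d}$ where the centre's list has size $\ge d$ and each leaf edge additionally forbids one colour; reorganising, it is a system-of-distinct-representatives / Hall-type problem that is always solvable when $|L|\ge d$ — indeed order the $v_i$ so that the ones whose forbidden colour lies in $L$ come last, and greedily assign. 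I expect the main obstacle to be precisely this last step: making the greedy order work uniformly, i.e.\ checking that the at-most-one external constraint per edge together with $|L_u|\ge\deg_G(u)$ never overconstrains the star. A clean way to close it is to invoke Hall's theorem on the bipartite ``edge $e_i$ versus admissible colour'' incidence graph: each $e_i$ has at least $|L| - 1 \ge d-1$ admissible colours, but we also get to \emph{choose} which original edge at $v_i$ contributes the binding constraint, so a short defect-Hall count finishes it. This keeps the whole argument constructive and polynomial, as required.
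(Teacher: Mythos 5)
There is a genuine gap, and it sits exactly where you flagged it: the final extension step is not always solvable, and no Hall-type count can rescue it as stated. In your ``real case'' every vertex of $Y$ has degree exactly $2$, so when you delete the star at $u\in X$ and colour $G'$ by induction, each $v_i$ has exactly one other incident edge, already coloured, and you have no freedom to ``choose which original edge at $v_i$ contributes the binding constraint'' --- the constraint is forced. Concretely, take $\deg_G(u)=d$ and $|L_u|=d$ (the hypothesis allows equality), and note that the induction hypothesis gives you no control over the colouring of $G'$: it may well colour all $d$ edges $v_iw_i$ with one and the same colour $c\in L_u$ (nothing in the hypotheses prevents $c$ from lying in every $L_{w_i}$). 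Then the $d$ star edges need $d$ pairwise distinct colours from $L_u\setminus\{c\}$, a set of size $d-1$: Hall's condition fails for the full family, and the extension is impossible. You cannot repair this by pre-emptively deleting colours from the lists at $w_i$ either, because $w_i\in X$ may have degree at least $3$, its edges must keep a common list, and its degree does not drop, so removing a colour violates $|L|\ge\deg$. The same oversight appears in your peeling step: after colouring the rest by induction, $u'$ has $\deg_G(u')-1$ coloured edges (not ``at most one''), so the edge $yu'$ faces up to $\deg_G(u')$ constraints against a list of size possibly exactly $\deg_G(u')$.

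The missing idea (which is how the paper proceeds) is to never push a leftover constraint onto a vertex whose degree stays the same. The paper's proof first disposes of the uniform-list case by K\"onig's theorem, then, given two adjacent edges $xy,x'y$ with different lists and a colour available for one but not the other, it colours a \emph{maximal path of degree-2 vertices} starting with $xy$ greedily; the only surviving constraint lands on the far endpoint $v_k$, which (if it matters) has degree at least $3$, hence lies in $X$; one removes the used colour from \emph{all} lists at $v_k$, which preserves the common-list condition, and the deletion of the path edge lowers $\deg(v_k)$ by one, so the inequality $|L|\ge\max\{\deg(u),\deg(v)\}$ is preserved and induction applies. Your decomposition into stars (or contracted multigraph edges) discards precisely this compensation mechanism, so as written the proof does not go through.
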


\begin{proof}
  We prove the theorem by induction on the number of edges.  We may
  assume that $G$ is connected for otherwise we work on
  components separately.  If all edges receive the same list (in
  particular when there are no edges at all) then the result follows
  from Theorem~\ref{th:konig}.  So, let $xy$ and $x'y$ be adjacent
  edges with different lists.  Note that from our assumptions, $x, x'
  \in X$ and $y \in Y$.  W.l.o.g.\ colour~$1$ is available for $xy$
  and not for $x'y$.  Let $P = v_1, \dots, v_k$ be a sequence of
  vertices such that $v_1=y$, $v_2 = x$, vertices $v_1, \dots,
  v_{k-1}$ are distinct, and for all $1 \leq i < k$, $v_i v_{i+1} \in
  E(G)$ and $\deg_G(v_i) =2$.  Suppose that $P$ is maximal w.r.t.\
  these properties.  Note that $v_k$ has degree~2 if and only if $G$
  is a cycle.

  We colour $v_1v_2$ with Colour~1 and colour greedily the edges
  $v_2 v_3$, \dots, $v_{k-1}v_k$ with some available colour, that is
  for each edge a colour in the list of the edge, not used for the
  preceding edge.  Note that this gives a colouring of the edges of
  $P$, even if $v_{k-1}v_k = x'y$ (this happens when $G$ is a
  cycle), because in this case, the colour used for $v_1 v_2
  = xy$ is not in the list of $v_{k-1}v_k = x'y$.  We delete all the
  edges of $P$ and, if $v_k$ has degree at least~3, we remove from all
  lists of edges incident to it, the colour used for $v_{k-1}v_k$. We
  edge-colour what remains by the induction hypothesis.
\end{proof}

Lemma~\ref{t:list-edge-colouring-bipartite} 
extends to 2-sparse graphs in the following way. 

\begin{lemma}
  \label{t:list-edge-colouring-2-sparse}
  Let $G=(V,E)$ be a 2-sparse graph and suppose that a list $L_{uv}$ of colours
  is associated to each edge $uv\in E$.  Let $S$ be a stable
  set of $G$ such that $V_{\geq3}(G)\subset S$.  Suppose that for every vertex $u\in S$,
  all edges incident to $u$ receive the same list.  If for each edge
  $uv\in E$, $|L_{uv}|\geq\max\{ \deg_{G}(u),\deg_{G}(v)\}$ and for
  each edge $uv\in E$ with no end in $S$, $|L_{uv}|\geq 3$, then there
  is an edge-colouring $\pi$ of $G$ such that, for each edge $uv\in
  E$, $\pi(uv)\in L_{uv}$.
\end{lemma}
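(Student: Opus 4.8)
The plan is to reduce Lemma~\ref{t:list-edge-colouring-2-sparse} to Lemma~\ref{t:list-edge-colouring-bipartite} by peeling off the part of $G$ that is not already handled by the bipartite argument. Recall that in a $2$-sparse graph $G$ every edge is incident to a vertex of degree at most~$2$, so $V_{\geq 3}(G)$ is a stable set; since $S$ contains $V_{\geq 3}(G)$ and is stable, the bipartite subgraph $B(G,S)$ has one side $S$ and its other side consists of degree-$\leq 2$ vertices together with their incident edges, so $B(G,S)$ satisfies the hypotheses of Lemma~\ref{t:list-edge-colouring-bipartite} with $X = S$: every vertex of $X$ gets a single list on all its incident edges, and $|L_{uv}| \geq \max\{\deg_G(u),\deg_G(v)\} \geq \max\{\deg_{B(G,S)}(u),\deg_{B(G,S)}(v)\}$ because degrees only drop when passing to a subgraph.

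First I would colour $B(G,S)$ by Lemma~\ref{t:list-edge-colouring-bipartite}. What remains to colour is $H = G \setminus E(B(G,S))$, i.e.\ the edges with \emph{no} end in $S$; by hypothesis every such edge has $|L_{uv}| \geq 3$. I claim $H$ has maximum degree at most~$2$: an edge of $H$ has both ends outside $S \supseteq V_{\geq 3}(G)$, so both ends have degree $\leq 2$ in $G$, hence each vertex of $H$ is incident to at most two edges of $H$. Thus $H$ is a disjoint union of paths and cycles. Now I would colour $H$ greedily, edge by edge along each path or around each cycle, but respecting a reduced list at each vertex: for each vertex $v$ incident to edges of $H$, if $v$ has degree $2$ in $G$ but one of its two $G$-edges was already coloured inside $B(G,S)$ — this happens when exactly one neighbour of $v$ lies in $S$ — then that colour is forbidden for the unique $H$-edge at $v$. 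So the true ``budget'' at an edge $uv$ of $H$ is at least $|L_{uv}|$ minus the number of ends of $uv$ whose other $G$-edge was pre-coloured.

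The cleanest way to organize the greedy colouring of $H$ is to argue separately on each connected component $Q$ of $H$. If $Q$ is a path, start from an endpoint and colour along it: at the first edge we must avoid at most one pre-coloured colour at the path-endpoint, so $|L|\geq 3$ suffices; at each subsequent edge we avoid the colour of the previous $H$-edge plus at most one pre-coloured colour at the far endpoint, again at most two forbidden colours against a list of size $\geq 3$ — except possibly at the very last edge of the path, where the far endpoint might also contribute a pre-coloured colour, giving up to two forbidden colours there plus the previous-edge colour, for a total of three; here I would instead start the colouring from \emph{whichever} of the two endpoints is ``free'' (has no pre-coloured $G$-edge), and if both endpoints of $Q$ carry a pre-coloured edge then $Q$'s two end-edges together forbid at most two colours, so I colour those two end-edges first (consistently, which is possible since $|L|\geq 3 \geq 2+1$) and then fill the interior greedily. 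If $Q$ is a cycle, all its vertices have degree exactly~$2$ in $G$, so none of their $G$-edges lie in $B(G,S)$ and there are no pre-coloured constraints at all; then $Q$ is an even or odd cycle to be list-edge-coloured with lists of size $\geq 3 > 2 = \Delta(Q)$, which is a standard fact (and follows from Theorem~\ref{t:BKW} or a direct greedy argument breaking the cycle at one edge).

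The main obstacle is precisely this bookkeeping at the ends of the path-components of $H$: one must make sure that the at-most-one pre-coloured colour forbidden at a degree-$2$ vertex, together with the propagating ``previous edge'' colour, never exhausts a list of size $3$, which forces a careful choice of starting point / starting edge in each component as described above. Everything else is either a direct appeal to Lemma~\ref{t:list-edge-colouring-bipartite} or the elementary observation that $H$ has maximum degree $\leq 2$, and the whole procedure is plainly polynomial, so the algorithmic claim comes for free.
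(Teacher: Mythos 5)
Your proposal is correct and follows essentially the same route as the paper: colour $B(G,S)$ via Lemma~\ref{t:list-edge-colouring-bipartite}, then finish the edges with no end in $S$ greedily using their lists of size at least~3. The only difference is that your endpoint bookkeeping is unnecessary: every edge with no end in $S$ has both ends of degree at most~2 in $G$, hence is adjacent to at most two edges of $G$ in total, so a greedy colouring in \emph{any} order works, and your ``three forbidden colours at the last edge of a path'' scenario cannot arise (it would require an endpoint of degree at least~3 outside $S$); this is exactly the one-line greedy argument the paper uses.
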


\begin{proof}
  Let $G'=B(G,S)$. Note that $G'\subset G$, so each edge of $G'$ has a list of colours.
  We edge-colour $G'$ by Lemma~\ref{t:list-edge-colouring-bipartite}.  It
  remains to colour the edges in $E(G) \setminus E(G')$.  Each of them
  has a list of colours of size 3 and is adjacent to at most two edges
  of~$G$.  So, they can be coloured greedily.
\end{proof}

\begin{lemma}
  \label{l:edgeColorSparse}
  A 2-sparse graph $G$ of maximum degree $\Delta\geq 3$ is
  $\Delta(G)$-edge-colourable.
\end{lemma}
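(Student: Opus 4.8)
The plan is to derive Lemma~\ref{l:edgeColorSparse} as an immediate consequence of the list-colouring statement in Lemma~\ref{t:list-edge-colouring-2-sparse}. Since $G$ is $2$-sparse, the set $S := V_{\geq 3}(G)$ is a stable set that contains $V_{\geq 3}(G)$ trivially, so it is a legitimate choice for the set $S$ in the hypothesis of Lemma~\ref{t:list-edge-colouring-2-sparse}. I would then assign to \emph{every} edge $uv \in E$ the same list $L_{uv} = \{1, 2, \dots, \Delta\}$ of size $\Delta$, which vacuously satisfies the requirement that all edges incident to a vertex of $S$ receive a common list. This uniform list has size $\Delta \geq 3$, so the condition $|L_{uv}| \geq 3$ for edges with no end in $S$ is met; and $|L_{uv}| = \Delta \geq \max\{\deg_G(u), \deg_G(v)\}$ holds by definition of maximum degree. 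Lemma~\ref{t:list-edge-colouring-2-sparse} then yields an edge-colouring $\pi$ with $\pi(uv) \in \{1,\dots,\Delta\}$ for every edge, which is precisely a $\Delta(G)$-edge-colouring.

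There is essentially no obstacle here: the work has already been done in establishing Lemma~\ref{t:list-edge-colouring-2-sparse}, and this lemma is just the specialisation to constant lists. The only point requiring a word of care is verifying that the hypotheses of Lemma~\ref{t:list-edge-colouring-2-sparse} are genuinely satisfied --- in particular that $V_{\geq 3}(G)$ is stable in a $2$-sparse graph (which was already observed in the text preceding Lemma~\ref{t:list-edge-colouring-bipartite}, since any edge joining two degree-$\geq 3$ vertices would violate $2$-sparseness) and that the "same list on edges at a vertex of $S$" condition is trivially true when the list is globally constant. If one wanted the statement for its own sake without invoking the list version, an alternative direct route would be to colour the bipartite graph $B(G, V_{\geq 3}(G))$ by K\"onig's theorem (Theorem~\ref{th:konig}) using $\Delta$ colours, and then extend to the remaining edges --- each of which has both ends of degree at most $2$, hence is adjacent to at most two already-coloured edges --- greedily, which works since $\Delta \geq 3$. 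But the cleanest presentation is the one-line deduction from Lemma~\ref{t:list-edge-colouring-2-sparse}, and that is what I would write.

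For the algorithmic remark implicit in Theorem~\ref{th:main}, I would note that the proof of Lemma~\ref{t:list-edge-colouring-2-sparse} (and of Lemma~\ref{t:list-edge-colouring-bipartite} on which it rests) is constructive: it repeatedly peels off a maximal path of degree-$2$ vertices, colours it greedily, updates lists, and recurses, with everything outside $B(G,S)$ handled by a final greedy pass. Each of these steps is polynomial, so the resulting $\Delta(G)$-edge-colouring of a $2$-sparse graph is produced in polynomial time, as needed for the overall complexity bound.
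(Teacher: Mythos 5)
Your proposal is correct and is essentially identical to the paper's own proof: assign the constant list $\{1,\dots,\Delta(G)\}$ to every edge and invoke Lemma~\ref{t:list-edge-colouring-2-sparse} with $S=V_{\geq3}(G)$, whose hypotheses you verify correctly. The extra remarks (the K\"onig-plus-greedy alternative and the algorithmic observation) are fine but not needed.
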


\begin{proof}
  We associate to each edge the list $\{1, \dots,
  \Delta(G)\}$ of colours and apply Lemma~\ref{t:list-edge-colouring-2-sparse}.
\end{proof}

From here on, we study total colouring of 2-sparse graphs.

\begin{lemma}
  \label{l:extSparseDelta4}
  Let $G=(V, E)$ be a 2-sparse graph of maximum degree $\Delta \geq 4$
  and let $S$ be a stable set of $G$ such that $V_{\geq3}(G)\subset S$.  Suppose that
  each vertex of $S$ is precoloured with some colour from $\{1, \dots,
  \Delta + 1\}$.  Suppose that a list $L_{uv}$ of colours from
  $\{1,2,...,\Delta+1\}$ is associated to every edge $uv$ with one end
  in $S$, and for every vertex $u\in S$, all edges incident to $u$
  receive the same list.  Suppose that for each edge $uv$ with one end
  $u\in S$, $|L_{uv}|\geq\max\{\deg_{G}(u),\deg_{G}(v)\}$ and
  $u$ is not precoloured with a colour from $L_{uv}$.

  Then there is a total-colouring $\pi$ of $G$ with colours in $\{1,
  \dots \Delta +1\}$ that extends the precolouring of the vertices of $S$ and
  such that $\pi(uv)\in L_{uv}$ for each edge $uv$ with one end in $S$.
\end{lemma}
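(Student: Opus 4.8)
The plan is to colour the edges first and the remaining vertices afterwards. First I would reduce the edge part to the list-edge-colouring result for 2-sparse graphs. Extend the given lists to \emph{all} of $E$ by giving every edge with no end in $S$ the full list $\{1,\dots,\Delta+1\}$. Such an edge joins two vertices of degree at most~$2$ (since $V_{\geq 3}(G)\subset S$), so its list has size $\Delta+1\ge 5$, which is at least $3$ and at least $\max\{\deg_{G}(u),\deg_{G}(v)\}$; together with the hypotheses on the edges meeting $S$, all the requirements of Lemma~\ref{t:list-edge-colouring-2-sparse} are satisfied. Applying it yields an edge-colouring $\pi$ of $G$ with $\pi(uv)\in L_{uv}$ for every edge. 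In particular, for an edge $uv$ with $u\in S$ we have $\pi(uv)\in L_{uv}$ while $u$ is precoloured outside $L_{uv}$, hence $\pi(uv)\ne\pi(u)$.

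Next I would colour the vertices of $V\setminus S$ (those of $S$ being precoloured). Let $H=G[V\setminus S]$; every vertex of $H$ has degree at most~$2$ in $G$, so $H$ is a disjoint union of paths and cycles (an isolated vertex counting as a trivial path). To each $v\in V(H)$ assign the list obtained from $\{1,\dots,\Delta+1\}$ by deleting the $\pi$-colours of the edges of $G$ incident to $v$ and the precolours of the neighbours of $v$ lying in $S$. Any colouring of $H$ respecting these lists, combined with $\pi$ and the precolouring of $S$, is a valid total-colouring of $G$: vertex--edge conflicts are ruled out by the construction of the lists (and by the previous paragraph for edges meeting $S$); vertex--vertex conflicts between $V\setminus S$ and $S$ are ruled out by the construction of the lists; vertex--vertex conflicts inside $V\setminus S$ correspond exactly to the edges of $H$, which are properly coloured; and edge--edge conflicts are handled by $\pi$. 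So it remains to list-colour the components of $H$.

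This last point is the crux, and it is where $\Delta\ge 4$ enters. A vertex $v$ of $H$ with $\deg_{G}(v)=2$ deletes at most two edge-colours from its list; it has a neighbour in $S$ only if $\deg_{H}(v)=1$, that is, only if $v$ is an endpoint of a path component, in which case it deletes at most one additional colour; every other vertex of $H$ deletes at most $\deg_{G}(v)\le 2$ colours and has no neighbour in $S$. Hence every interior vertex of a path component and every vertex of a cycle component gets a list of size at least $\Delta-1\ge 3$, every endpoint of a path component gets a list of size at least $\Delta-2\ge 2$, and every vertex of degree at most~$1$ in $G$ gets a list of size at least $\Delta\ge 4$. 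Now a path whose vertices all have lists of size at least~$2$ is list-colourable greedily from one endpoint, and a cycle whose vertices all have lists of size at least~$3$ is list-colourable (colour it greedily around the cycle; the last vertex still has at least one colour available). This colours every component of $H$ and finishes the proof. The hard part is exactly this bookkeeping: the cycle components of $H$ are the tight case, and it is for them that $\Delta\ge 4$ (rather than merely $\Delta\ge 3$) is needed.
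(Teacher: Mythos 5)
Your overall route is essentially the paper's: list-edge-colour the edges meeting $S$ (you do this through Lemma~\ref{t:list-edge-colouring-2-sparse}, which is itself Lemma~\ref{t:list-edge-colouring-bipartite} applied to $B(G,S)$ plus a greedy step), and then colour the remaining low-degree elements. The plan is sound, and the list conditions you verify for the edge phase are correct. However, one claim in your bookkeeping for the vertex phase is false: a vertex $v\in V\setminus S$ with $\deg_{G}(v)=2$ can have \emph{both} of its neighbours in $S$. Such a $v$ is an isolated vertex of your $H=G[V\setminus S]$ (so $\deg_{H}(v)=0$, not $1$), and it loses up to four colours (two edge colours and two precolours of $S$-neighbours), so its list may have size only $\Delta-3\geq 1$, not the $\Delta-2\geq 2$ you assert for ``endpoints of path components''. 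This does not sink the proof --- an isolated vertex of $H$ needs only a nonempty list, and $\Delta\geq 4$ gives $\Delta+1-4\geq 1$ --- but as written your case analysis misses this case and the stated bound fails for trivial path components.

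It is also worth noting that the path/cycle analysis of $H$ is more machinery than is needed, and your closing remark that the cycle components are ``the tight case'' forcing $\Delta\geq 4$ is misplaced. The paper finishes in one line: every still-uncoloured element (an edge with no end in $S$, or a vertex of degree at most $2$) is adjacent or incident to at most four other elements of $G$, while $\Delta+1\geq 5$ colours are available, so all of them can be coloured greedily in any order. The genuinely tight situation is an element with four coloured neighbours --- for instance exactly the isolated-in-$H$ vertex your analysis overlooked --- and it is there, not on cycles (whose vertices retain lists of size at least $\Delta-1\geq 3$), that the hypothesis $\Delta\geq 4$ is really used.
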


\begin{proof}
  Let $G'=B(G,S)\subset G$. We edge-colour $G'$ by
  Lemma~\ref{t:list-edge-colouring-bipartite}.  It remains to colour
  edges in $E(G) \setminus E(G')$ and the vertices of degree at most~2
  of $G$.  Each of them is an element of $G$ incident to at most four
  elements of $G$ and at least five colours are available.  So, they
  can be coloured greedily.
\end{proof}

The following shows how a total precolouring of the ends of a path can be
extended to the path when only four colours are available. 

\begin{lemma}
  \label{l:extPath}
  Let $k\geq 3$ be an integer and $P= p_1 \dots p_k$ a path.  Suppose
  that $p_1$, $p_1p_2$, $p_{k-1}p_k$ and $p_k$ are total-precoloured
  in such a way that $p_1$ and $p_k$ receive the same colour.  This
  can be extended to a 4-total-colouring of $P$.
\end{lemma}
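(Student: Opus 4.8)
The plan is to colour the remaining elements of the path, namely the internal vertices $p_2,\dots,p_{k-1}$ and the internal edges $p_2p_3,\dots,p_{k-2}p_{k-1}$, processing them from left to right in the natural order $p_2,\ p_2p_3,\ p_3,\ p_3p_4,\ \dots,\ p_{k-2}p_{k-1},\ p_{k-1}$. In this order each element other than the last two has only two of its total-neighbours already coloured or precoloured (so, with four colours available, it always has at least two admissible colours left). The one genuine obstacle is the last vertex $p_{k-1}$: it is adjacent to the four already-determined elements $p_{k-2}$, $p_{k-2}p_{k-1}$, $p_{k-1}p_k$, $p_k$, so a careless earlier choice may leave no colour for it. We may of course assume the given precolouring is proper, so in particular $c(p_1)\ne c(p_1p_2)$, $c(p_k)\ne c(p_{k-1}p_k)$, and (when $k=3$) $c(p_1p_2)\ne c(p_2p_3)$.

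The key idea, for $k\ge 5$, is to force the edge $p_{k-2}p_{k-1}$ to receive the colour $c(p_k)$. Once this is arranged, the four neighbours of $p_{k-1}$ use only three distinct colours ($p_{k-2}p_{k-1}$ and $p_k$ being ``twins''), so $p_{k-1}$ can be completed. To make $c(p_k)$ available for $p_{k-2}p_{k-1}$ it suffices that $c(p_k)$ differ from the colours of the three elements $p_{k-2}$, $p_{k-3}p_{k-2}$, $p_{k-1}p_k$ coloured before it: the last inequality holds by properness, and for the first two I simply add, when colouring $p_{k-3}p_{k-2}$ and then $p_{k-2}$ during the left-to-right sweep, the extra requirement that their colours avoid $c(p_k)$. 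Each of these two elements, when it is reached, still has at least two admissible colours (its list has size $3$ or $4$ and it has at most two already-coloured neighbours), so forbidding one more colour still leaves a legal choice. Thus for $k\ge 5$: colour $p_2,p_2p_3,\dots,p_{k-3}p_{k-2},p_{k-2}$ greedily from the left, imposing ``avoid $c(p_k)$'' on the last two; then set $c(p_{k-2}p_{k-1})=c(p_k)$; then finish $p_{k-1}$.

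It remains to treat the small cases $k=3$ and $k=4$ by hand, and this is where the hypothesis $c(p_1)=c(p_k)$ does its work. For $k=3$ the only uncoloured vertex $p_2$ is adjacent to $p_1,p_3,p_1p_2,p_2p_3$, and since $c(p_1)=c(p_3)$ these use only three colours, leaving one for $p_2$. For $k=4$ the three uncoloured elements $p_2,\ p_2p_3,\ p_3$ form a triangle; if $c(p_1p_2)\ne c(p_k)$ the same trick applies (colour $p_2$, give $p_2p_3$ the colour $c(p_k)$, which is now admissible, and then $p_3$ sees only three colours), whereas if $c(p_1p_2)=c(p_k)=c(p_1)$ then the two colours forbidden to $p_2$ coincide, its list has size $3$, and a triangle one of whose three lists has size $3$ and the others size at least $2$ is always list-colourable.

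I expect the bulk of the work to be the right-end bookkeeping in the case $k\ge 5$: checking carefully that $p_{k-3}p_{k-2}$ and $p_{k-2}$ each really do have two admissible colours at the moment they are coloured (so that the constraint ``avoid $c(p_k)$'' costs nothing), that all three predecessors of $p_{k-2}p_{k-1}$ then avoid $c(p_k)$, and that $c(p_k)$ lies in the list of $p_{k-2}p_{k-1}$; and the degenerate subcase $k=4$ with $c(p_1p_2)=c(p_k)$, which does not fit the general pattern and must be argued separately.
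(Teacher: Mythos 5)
Your proof is correct, but it is organized differently from the paper's. The paper argues by induction on $k$: after normalizing the precolouring to $1,2,2,1$ or $1,2,3,1$, it colours $p_2, p_2p_3, p_{k-2}p_{k-1}, p_{k-1}$ with $4,3,1,4$, so that the two new endpoints $p_2,p_{k-1}$ again share a colour and the induction hypothesis applies to $p_2\dots p_{k-1}$ (with base cases $k=3,4$ done by hand). You instead do a single left-to-right greedy sweep and only intervene near the right end, forcing $p_{k-2}p_{k-1}$ to take the colour $c(p_k)$ so that $p_{k-1}$ sees at most three colours; interestingly, this is the same local trick the paper uses (its Colour~1 on $p_{k-2}p_{k-1}$ is exactly $c(p_k)$), but embedded in a non-inductive argument. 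Your bookkeeping for $k\ge 5$ checks out (each element in your order has at most two coloured total-neighbours when reached, so the extra ``avoid $c(p_k)$'' constraint on $p_{k-3}p_{k-2}$ and $p_{k-2}$ costs nothing), and your base cases $k=3,4$ are sound, with the hypothesis $c(p_1)=c(p_k)$ used exactly there; indeed your argument shows the equality is not even needed once $k\ge 5$, which is marginally more general than the paper's statement. Two small remarks: in the $k=4$ subcase with $c(p_1p_2)\ne c(p_k)$ you should state explicitly that $c(p_2)\ne c(p_k)$ because $p_2$ is adjacent to $p_1$ and $c(p_1)=c(p_k)$ (this is what makes $c(p_k)$ admissible for $p_2p_3$); and your second $k=4$ subcase, $c(p_1p_2)=c(p_k)$, is vacuous, since properness gives $c(p_1p_2)\ne c(p_1)=c(p_k)$ --- handling it is harmless but unnecessary. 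Both proofs are equally elementary and equally algorithmic (linear time on the path), which is all that is needed for the use of this lemma in Lemma~\ref{l:extSparse}.
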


\begin{proof}
  W.l.o.g.\ we suppose that $p_1$ and $p_k$ are precoloured with
  Colour~1, $p_1p_2$ is precoloured with Colour~2 and $p_{k-1}p_{k}$
  is precoloured with Colour~2 or~3.  
  We proceed by induction on $k$. W.l.o.g. we suppose that $p_1$, $p_1p_2$, $p_{k-1}p_k$ and $p_k$ 
  are precoloured with colours 1, 2, 2, 1 or 1, 2, 3, 1. If $k = 3$ then $p_2$ is the only uncoloured element; 
  give it Colour 4. If $k = 4$ then colour $p_2$, $p_2p_3$ and $p_3$ with colours 3,1,4. If $k\geq5$ 
  then colour $p_2$, $p_2p_3$, $p_{k-2}p_{k-1}$ and $p_{k-1}$ with colours 4, 3, 1, 4; this can be extended 
  to a total-colouring of $P$ by the induction hypothesis applied to path $p_2 . . . p_{k-1}$.
%
\end{proof}

The following shows how to extend a total precolouring in a 2-sparse
graph of maximum degree~3.

\begin{lemma}
  \label{l:extSparse}
  Let $G$ be a 2-connected 2-sparse graph of maximum degree~3.  Then $G$
  is 4-total-colourable.  Moreover suppose that $u$ is a vertex of
  degree~2 that has two neighbours $a, b$ of degree~3 and suppose that
  $a$, $b$, $au$, $ub$ receive respectively Colours 1, 1, 2, 3.  This
  can be extended to a total-colouring of~$G$ using $4$ colours.
\end{lemma}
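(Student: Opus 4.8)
The goal is to total-colour a $2$-connected $2$-sparse graph $G$ of maximum degree $3$ with $4$ colours, and moreover to do so respecting a prescribed precolouring on the four elements $a$, $b$, $au$, $ub$ around a degree-$2$ vertex $u$ whose two neighbours $a,b$ have degree $3$. The structure of a $2$-connected $2$-sparse graph of maximum degree $3$ is very restricted: since every edge is incident to a vertex of degree at most $2$, the set $V_{\geq 3}(G)$ is stable, and every vertex of degree $2$ has both neighbours in $V_{\geq 3}(G)$ (if a degree-$2$ vertex had a degree-$2$ neighbour, one could walk along a path of degree-$2$ vertices; by $2$-connectedness and the fact that $G$ is not just a cycle when $\Delta=3$, this path must end at degree-$3$ vertices, and internal degree-$2$ vertices are harmless). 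So $G$ looks like a cubic-ish ``core'' on $V_{\geq 3}(G)$ with each edge of the core possibly subdivided by a chain of degree-$2$ vertices — in fact, because $G$ is $2$-sparse, \emph{every} edge joining two degree-$3$ vertices is absent, so each degree-$3$ vertex sees only degree-$2$ vertices, and each such degree-$2$ vertex is the middle of a path whose ends are degree-$3$ vertices (or it lies on a longer subdivided chain).

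First I would reduce to the ``unprecoloured'' part of the statement via the precoloured part, or rather handle them together by a peeling argument. The plan is: consider the bipartite graph $B(G,S)$ with $S=V_{\geq 3}(G)$ — this captures exactly the edges incident to degree-$3$ vertices. Start by precolouring the vertices of $S$: give $a$ and $b$ Colour $1$ as forced, and greedily colour the remaining vertices of $S$ with colours from $\{1,2,3,4\}$ (trivial since $S$ is stable, no constraints among them yet). Then set up edge-lists: for each edge $xy$ with $x\in S$, let its list be $\{1,2,3,4\}$ minus the colour of $x$, which has size $3\geq \max\{\deg x,\deg y\}=3$; but we must also force $au\mapsto 2$ and $ub\mapsto 3$, which we can do by shrinking those two particular lists to singletons (legal: $2$ is not the colour of $a$, $3$ is not the colour of $b$). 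Unfortunately Lemma~\ref{t:list-edge-colouring-bipartite} requires all edges at a vertex of $X$ to get the \emph{same} list, which the singleton trick breaks at $a$ and at $b$. So instead I would first delete $u$ together with its two edges, colour the rest, and re-insert $u$.

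Concretely: in $G-u$, vertex $a$ has degree $2$ and $b$ has degree $2$, so $G-u$ is still $2$-sparse of maximum degree $\leq 3$ but may be disconnected or have cutvertices; in any case apply Lemma~\ref{l:extSparseDelta4}'s analogue — actually that lemma needs $\Delta\geq 4$, so here I would instead apply Lemma~\ref{t:list-edge-colouring-2-sparse} to edge-colour the edges of $G-u$ with the uniform lists $\{1,2,3,4\}$ at the degree-$\leq 3$ vertices (size $3$ or $4$ suffices; edges with no end in $S$ get the full list of size $4\geq 3$), and then colour the degree-$\leq 2$ vertices of $G-u$ greedily, each being incident to at most four already-coloured elements out of five colours. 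This gives a $4$-total-colouring of $G-u$. Now re-insert $u$: it is incident to five elements in $G$ (itself, $au$, $ub$, and edges at $a,b$ away from $u$), so naively $5$ colours are not enough for a greedy step. \textbf{This is the main obstacle}: closing up at $u$ with only $4$ colours. To handle it, I would not colour $G-u$ freely but with the two neighbours $a,b$ \emph{precoloured with Colour $1$} (legal since $a,b$ are non-adjacent in $G$ hence in $G-u$) and with the two edges $ab'$-type edges at $a$ and $b$ given lists avoiding a reserved colour; equivalently, colour $G-u$ so that among the two edges of $G-u$ at $a$ the colours used are $\{3,4\}$ and among the two at $b$ they are $\{2,4\}$ — arrangeable by list-colouring since at $a$ we want the list $\{3,4\}$ (size $2$, but $\deg_{G-u}(a)=2$, OK for the bipartite lemma as long as lists are uniform at $a$) and at $b$ the list $\{2,4\}$. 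Then $u$ sees colours $1$ (on $a$ and $b$), $3,4$ (edges at $a$), $2,4$ (edges at $b$), so the edges $au,ub,u$ must avoid $\{1,3,4\}$ resp.\ $\{1,2,4\}$ resp.\ $\{1,2,3,4\}$ — but we already committed $au=2$, $ub=3$, leaving only the vertex $u$ uncoloured with forbidden set $\{1,2,3\}$: give $u$ Colour $4$. For the plain (unprecoloured) statement, simply drop the commitments $au=2,ub=3$ and run the same deletion-of-$u$ argument with no constraints, which is strictly easier. The remaining care is verifying that $G-u$, though possibly not $2$-connected, still satisfies the hypotheses of the $2$-sparse list lemmas (it is still $2$-sparse, and $S\setminus\{$possibly $a,b$ if they drop to degree $\leq 2\}$ still contains $V_{\geq 3}(G-u)$), and that the forced small lists at $a,b$ of size $2=\deg_{G-u}(a)=\deg_{G-u}(b)$ keep Lemma~\ref{t:list-edge-colouring-2-sparse} applicable; this is a routine check.
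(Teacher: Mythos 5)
Your reduction at $u$ (delete $u$, force the two remaining edges at $a$ to use colours $\{3,4\}$ and the two at $b$ to use $\{2,4\}$, then give $u$ Colour~4) is fine, and the list-size conditions you check for edge-colouring $G-u$ do hold. The genuine gap is in the completion step: after colouring \emph{all} edges of $G-u$ (including edges joining two degree-2 vertices) and the vertices of $S$, you claim the remaining degree-$\leq 2$ vertices can be coloured greedily because each ``is incident to at most four already-coloured elements out of five colours''. But here the palette has only four colours, not five --- the five-colour count belongs to Lemma~\ref{l:extSparseDelta4}, which, as you yourself note, needs $\Delta\geq 4$. A degree-2 vertex meets four coloured elements (two incident edges and two neighbouring vertices), which can exhaust $\{1,2,3,4\}$, and the failure is not merely one of ordering: take a chain $s\,w_1\,w_2\,s'$ with $s,s'\in S$ coloured~1 and edge colours $2,3,2$ along the chain (a perfectly legal output of Lemma~\ref{t:list-edge-colouring-2-sparse} with your lists). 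Then $w_1$ and $w_2$ are each forced to Colour~4, yet they are adjacent. So the strategy ``colour every edge first, then the low-degree vertices greedily'' can reach a dead end with four colours; this is exactly the hard part of the $\Delta=3$ case.

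The paper avoids this by never colouring the interior edges of the degree-2 chains in advance: it gives Colour~1 to every vertex of $S$, edge-colours only $B(G,S)$ with Colours $2,3,4$ via K\"onig (permuting colours so that $au,ub$ get $2,3$, which also makes your deletion of $u$ unnecessary), and then colours each maximal path with degree-3 ends and degree-2 interior \emph{edges and vertices together} using Lemma~\ref{l:extPath}. The hypothesis of that lemma --- the two precoloured ends share a colour (here Colour~1) --- is precisely what guarantees the 4-colour extension along each chain. Your proposal is missing this joint treatment of the chains (or some equivalent recolouring argument), and without it the final step does not go through.
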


\begin{proof}
  Let $G' = B(G,S) \subset G$, where $S = V_{\geq3}(G)$. 
  We first total-colour $G'$. We give Colour~1 to all vertices of $S$.
  Then by Theorem~\ref{th:konig} we edge-colour $G'$ with Colours $2,
  3, 4$ (up to a relabeling, it is possible to give Colour $2, 3$ to
  $au$, $ub$ respectively).  We extend this to a total-colouring of
  $G$ by considering one by one the paths of $G$ whose interior
  vertices have degree~2 and whose ends have both degree~3.  Note that
  since $G$ is 2-connected with maximum degree~3, these paths
  edge-wise partition~$G$ and vertex-wise cover~$G$.  Let $P =
  p_1\dots p_k$ ($k\geq 3$ since $G$ is 2-sparse) be such a path.  The
  following elements are precoloured: $p_1, p_1p_2, p_{k-1}p_k, p_k$.
  The precolouring satisfies the requirement of Lemma~\ref{l:extPath},
  so we can extend it to~$P$.
\end{proof}

\begin{lemma}
  \label{l:totalColorSparse}
  A 2-sparse graph $G$ of maximum degree $\Delta\geq3$ is
  $(\Delta+1)$-total-colourable.
\end{lemma}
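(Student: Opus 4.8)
The plan is to reduce to the cases already handled by the earlier lemmas in this section, treating the maximum degree in two regimes.

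\medskip

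\noindent\textbf{Reduction to 2-connected graphs.} First I would observe that total-colourings behave well under the block decomposition into 2-connected components. If $G$ is not 2-connected, write $G$ as the union of its blocks (in the classical sense: maximal 2-connected subgraphs, bridges, and isolated vertices); each block is again 2-sparse and has maximum degree at most $\Delta$. One colours the blocks one at a time in an order consistent with the block-cut tree, and when moving to a new block $B$ that shares a cutvertex $v$ with the already-coloured part, at most $\Delta$ elements incident to $v$ have been used, so one free colour remains for $v$ itself; a relabelling of the colours inside $B$ then makes the colourings agree on $v$, and no conflict can arise elsewhere since distinct blocks share only $v$. (If a block is a single edge, colour it greedily.) So we may assume $G$ is 2-connected. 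A small wrinkle: after passing to a block we must be sure the block still has an edge with both ends of degree $\ge 3$ only if needed — but in fact 2-sparseness is all we use, so this is fine.

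\medskip

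\noindent\textbf{The case $\Delta = 3$.} This is exactly Lemma~\ref{l:extSparse}: a 2-connected 2-sparse graph of maximum degree $3$ is $4$-total-colourable (the precolouring hypothesis in that lemma is stronger than what we need here, so we just invoke the first sentence of its statement).

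\medskip

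\noindent\textbf{The case $\Delta \ge 4$.} Here I would invoke Lemma~\ref{l:extSparseDelta4} with $S = V_{\ge 3}(G)$, which is a stable set because $G$ is 2-sparse. We need to supply a precolouring of the vertices of $S$ and lists $L_{uv}$ for the edges incident to $S$ satisfying the hypotheses of that lemma. Since $V_{\ge 3}(G)$ is stable, we may simply precolour \emph{every} vertex of $S$ with Colour $1$; this is a proper partial vertex-colouring because no two vertices of $S$ are adjacent. For each vertex $u \in S$ assign to all edges incident to $u$ the common list $L_u = \{2,\dots,\Delta+1\}$ (an edge with both ends in $S$ cannot occur, so there is no clash in choosing "the" list of an edge). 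Then $|L_{uv}| = \Delta \ge \max\{\deg_G(u),\deg_G(v)\}$, and $u$ is precoloured with Colour $1 \notin L_{uv}$, so all hypotheses of Lemma~\ref{l:extSparseDelta4} hold. It yields a $(\Delta+1)$-total-colouring of $G$, completing this case and the proof.

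\medskip

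\noindent\textbf{Main obstacle.} The genuinely delicate ingredients — the list-colouring of the bipartite core and the path-extension argument — are already packaged in Lemmas~\ref{l:extSparseDelta4} and~\ref{l:extSparse}, so the only thing requiring care here is the bookkeeping in the non-2-connected reduction (making sure a single spare colour is always available at a cutvertex and that relabelling inside a block is harmless), together with checking that the uniform choice $S = V_{\ge 3}(G)$, precolour-everything-with-$1$, list $\{2,\dots,\Delta+1\}$ really meets the precise hypotheses of Lemma~\ref{l:extSparseDelta4}. I expect no real difficulty beyond that.
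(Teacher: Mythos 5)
Your overall plan coincides with the paper's: for $\Delta\geq 4$ you precolour every vertex of $S=V_{\geq3}(G)$ with Colour~1, give every edge meeting $S$ the list $\{2,\dots,\Delta+1\}$ and invoke Lemma~\ref{l:extSparseDelta4}, exactly as in the paper; for $\Delta=3$ in the 2-connected case you invoke Lemma~\ref{l:extSparse}, again exactly as in the paper; and the remaining work is the reduction to 2-connected pieces (which the paper carries out, for the $\Delta=3$ regime only, by induction on a cutvertex after strengthening the statement to ``$\Delta\leq 3$ implies 4-total-colourable'').

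The one place where your argument as written does not go through is the gluing at a cutvertex $v$. You claim that after relabelling so that the colourings ``agree on $v$'', \emph{no conflict can arise elsewhere since distinct blocks share only $v$}. That premise is false for total-colourings: two edges lying in different blocks but both incident to $v$ are adjacent elements, so their colours must differ even though the blocks share only the vertex $v$. Matching the colour of $v$ alone is therefore not enough; the relabelling inside the new block $B$ must be chosen so that, in addition to fixing the colour of $v$, it maps the set of colours used on the edges of $B$ at $v$ into the colours not yet used on the already-coloured edges at $v$. This is always possible --- if $d_1$ edges at $v$ are already coloured and $B$ contributes $d_2$ more, then $d_1+d_2=\deg_G(v)\leq\Delta$ while $\Delta$ colours other than the colour of $v$ are available --- but it has to be said; the paper makes precisely this point when it assumes ``up to a relabeling that $v$ has the same colour in both total-colourings \emph{and} that the edges incident to $v$ are coloured in both with (at most 3) different colours''. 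A second, smaller omission: a block of a 2-sparse graph may be a cycle (maximum degree 2), which is covered neither by Lemma~\ref{l:extSparse} nor by Lemma~\ref{l:extSparseDelta4}; you handle single-edge blocks but should also note that cycles are 4-total-colourable, as the paper does explicitly. Both points are easy to repair, so your proof is correct once these details are supplied.
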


\begin{proof}
  We may assume that $G$ is connected, since otherwise it suffices to
  total-colour its components.
  
  When $\Delta(G) = 3$, we prove by induction a formally stronger
  statement: if $\Delta(G) \leq 3$ then $G$ is 4-total-colourable.  If
  $G$ has at most two vertices then the result clearly holds.  If $G$
  is 2-connected and $\Delta(G) = 2$ then $G$ is a cycle and a
  4-total-colouring exists.  If $\Delta(G) = 3$ and $G$ is
  2-connected, then we may rely on Lemma~\ref{l:extSparse}.  So we may
  assume that $G$ is not 2-connected and has a cutvertex $v$.  So, let
  $X$ and $Y$ be two disjoint non-empty sets that partition
  $V\setminus \{v\}$, with no edges between them.  The graphs $G[X\cup
  \{v\}]$ and $G[Y\cup\{v\}]$ are 4-total-colourable by induction,
  and we assume up to a relabeling that $v$ has the same colour in
  both total-colourings and that the edges incident to $v$ are
  coloured in both with (at most 3) different colours.  So, we obtain a
  4-total-colouring of $G$ by giving to each element the colour it has
  in one of the blocks.

  When $\Delta(G) \geq 4$, we give Colour $1$ to each vertex of degree
  at least 3.  We associate to each edge $uv\in E(G)$ such that
  $\deg_G(u) \geq 3$ or $\deg_G(v) \geq 3$ the list $L_{uv} = \{2, \dots
  \Delta(G) +1\}$.  Then, we apply Lemma~\ref{l:extSparseDelta4} to
  find the colouring.
\end{proof}

\section{Proof of Theorem~\ref{th:main}}
\label{s:proof}

\begin{proof}
Let $G = (V, E)$ be a chordless graph of maximum degree $\Delta \geq
3$.  We shall prove that $G$ is $\Delta$-edge-colourable and
$(\Delta+1)$-total-colourable by induction on $|V|$.  By
Lemmas~\ref{l:edgeColorSparse} and~\ref{l:totalColorSparse}, this
holds for 2-sparse graphs (in particular for the claw, the smallest
chordless graph of maximum degree at least 3).

If $G$ is not 2-connected, an edge- (resp.\ total-) colouring of $G$
can be recovered from an edge (resp.\ total-) colouring of its blocks.
So, we may assume that $G$ is 2-connected but not 2-sparse. According to
Theorem~\ref{t:chordless}, $G$ has a proper 2-cutset with split $(X,
Y, a, b)$ and we choose such a 2-cutset with minimum
$|X|$.  By Theorem~\ref{t:extremal}, the block $G_X$
is 2-sparse.

Let $S$ be the set of all vertices of degree at least~3 in the block
$G_X$.  Note that by Theorem~\ref{t:extremal}, $S$ contains $a$ and
$b$, and since $G_X$ is 2-sparse, $S$ is a stable set of $G_X$, and
therefore of $G$. In what follows, we consider three cases: 
edge-colouring, total-colouring with $\Delta\geq 4$, and total-colouring with $\Delta=3$.

\subsection*{Edge-colouring}

By induction, we know that $G[Y \cup \{a, b\}]$ (this is not $G_Y$, we
do not use the marker vertex) is $\Delta$-edge-colourable.  Let $C_a$
and $C_b$ be the sets of the colours given to the edges incident to $a$
and $b$ respectively in such a colouring.  
We show how the elements of $G[Y \cup \{a,b\}]$ coloured so far give
the required list conditions in Lemma~\ref{t:list-edge-colouring-2-sparse}
in order to edge-colour the 2-sparse $G[X \cup \{a,b\}]$.
To do so, we associate to each edge incident
to $a$ and $b$ the list $\{1, \dots, \Delta\} \setminus
C_a$ and $\{1, \dots, \Delta\} \setminus C_b$ of colours respectively.  To the
other edges, we associate the list $\{1, \dots, \Delta\}$.  We apply
Lemma~\ref{t:list-edge-colouring-2-sparse} to $G[X \cup \{a, b\}]$ and
$S$ to colour the edges of $G[X \cup \{a, b\}]$.  We obtain an
edge-colouring of $G$ with colours $\{1, \dots, \Delta\}$.

\subsection*{Total-colouring, $\Delta\geq 4$}

By induction, we know that $G[Y \cup \{a, b\}]$ is
$(\Delta+1)$-total-colourable.  Let $C_a$ (resp.\ $C_b$) be the sets of the
colours given to $a$ (resp.\ $b$) and to the edges incident to $a$
(resp.\ $b$).  
We show how the elements of $G[Y \cup \{a,b\}]$ coloured so far give
the required list conditions in Lemma~\ref{l:extSparseDelta4}
in order to total-colour the 2-sparse $G[X \cup \{a, b\}]$.
To do so, we colour $a, b$ with the colours they have in
$G[Y \cup \{a, b\}]$ and all other vertices from $S$ with Colour~1.
We associate to each edge incident to $a$ and $b$ the list $\{1, \dots, \Delta+1\} \setminus C_a$ of colours
and $\{1, \dots, \Delta+1\}
\setminus C_b$ respectively.  To all the other edges incident to a
vertex $u$ of degree at least 3 (so, $u\not\in\{a,b\}$), we associate
the list $\{2, \dots, \Delta+1\}$.  By Lemma~\ref{l:extSparseDelta4}
applied to $G[X \cup \{a, b\}]$ and $S$, we extend this to a
total-colouring of $G$ with colours in $\{1, \dots, \Delta+1\}$.

\subsection*{Total-colouring, $\Delta=3$}

By Theorem~\ref{t:extremal}, vertices $a, b$ both have at least two
neighbours in $X$.  Hence, since $G$ has maximum degree 3, vertices $a,
b$ both have a unique neighbour in $Y$, say $a', b'$ respectively.
Notice that vertices $a'$ and $b'$ are distinct, for otherwise $Y=\{a'\}=\{b'\}$ and $G[Y\cup\{a,b\}]$ is a path.
Let $G_Y'$ be the block obtained from $G[Y \cup \{a, b\}]$ by
contracting $a$ and $b$ into a vertex $w_{ab}$.  Block $G_X$ is built
as previously defined by adding to $G[X \cup \{a, b\}]$ a vertex $w$
adjacent to $a, b$.

It is easy to check that $G_Y$ is chordless.  By induction we
total-colour $G_Y$ with 4 colours.  W.l.o.g.\ the vertex $w_{ab}$
receives Colour 1, edge $w_{ab}a'$ Colour 2, edge $w_{ab}b'$ Colour 3.
In $G_X$, precolour $a$ and $b$ with Colour 1, $wa$ with Colour 2, and
$wb$ with Colour 3.  Apply Lemma~\ref{l:extSparse} to $G_X$.  This
gives a total-colouring of $G_X$ with 4 colours.  A 4-total-colouring of
$G$ is obtained as follows: elements of $G$ that are in one of $G_X,
G_Y$ receive the colour they have in this block.  Edges $aa'$ and $bb'$
receive Colours 2 and 3 respectively.

\subsection*{Algorithm and complexity analysis}

Here we sketch the description of an algorithm that computes the
colourings whose existence is proved above. As usual, $n$ denotes the
number of vertices of the input graph, and $m$ the number of its
edges.  All the proofs in Section~\ref{s:2-sparse} are clearly
algorithmic and most arguments rely on very simple searches of the graph.
The only exception, which is the slowest step, is when we use
Theorem~\ref{th:konig}.  The edge-colouring whose existence is claimed
by Theorem~\ref{th:konig} can be obtained in time $O(nm)$, see
Section~20.1 in~\cite{schrijver:opticombA}.  Finding a proper 2-cutset
satisfying the conditions of Theorem~\ref{t:extremal} can be performed in time~$O(n^2m)$ as follows.
First observe that we may assume $G$ is biconnected, for otherwise we can use
Hopcroft and Tarjan~\cite{hopcroft.tarjan:447} algorithm to compute its biconnected components,
and reconstructing the colouring from the blocks is easy: just identify the colours of the vertices that are 1-cusets.
Now we have a biconnected graph. For each pair of vertices, (first) test whether such pair forms a proper 2-cutset 
and (second) choose, among all pairs, a proper 2-cutset 
with a block of minimum order. The time complexity of the first step
(checking whether a pair $\{a,b\}$ of vertices is a proper 2-cutset can be done in linear time $O(n+m)$, which is the time for characterizing the connected components of $G\setminus\{a,b\}$. Since there are $O(n^2)$ pairs of vertices to be tested, we have shown the
claimed complexity time of $O(n^2m)$ to find a proper 2-cutset
satisfying the conditions of Theorem~\ref{t:extremal}.

Now, a colouring can be obtained as follows. 
Find a proper 2-cutset as in Theorem~\ref{t:extremal}, and if one is found build the 
two blocks with respect to it, one of which is 2-sparse. The not 2-sparse block (if
any) is coloured recursively, and the colouring is extended to the 2-sparse block 
following the strategy in the proofs in Section~\ref{s:2-sparse}.  
Since a block has fewer vertices than the original graph, there are at most $n$ recursive calls, so,
the global running time is $O(n^3m)$.
\end{proof}

\section*{Acknowledgement}

We are grateful to Pierre Aboulker and Marko Radovanovi\'c for pointing out a
mistake in a preliminary version of this work.

\section*{References}



\end{document}